\newtheorem{theorem}{Theorem}[section]
\newtheorem{lemma}[theorem]{Lemma}
\newtheorem{definition}{Definition}[section]
\newtheorem{example}[definition]{Example}
\newtheorem{proposition}[definition]{Proposition}
\newtheorem{remark}[definition]{Remark}
\renewenvironment{proof}{{\noindent\bf{Proof }}}{\hfill $\blacksquare$\\}
\begin{document}	
	\noindent \rule{\textwidth}{0.5mm}\\
\vspace{0.1cm}\\
\noindent\textbf{\Large Five-Lee-weight linear codes over $\mathbb{F}_{q}+u\mathbb{F}_{q}$}\\
\vspace{0.1cm}\\
\textbf{Pavan Kumar$^{\boldsymbol{1,2},\ast}$ and Noor Mohammad Khan$^{\boldsymbol{1}}$}\\
 {$^{1}$Department of Mathematics, Aligarh Muslim University, Aligarh 202002, India}\\
 {$^{2}$Department of Electronic Systems Engineering, Indian Institute of Science, Bengaluru 560012, India}\\
\let\thefootnote\relax\footnote{\noindent$^{\ast}$Corresponding author\\  Pavan Kumar\\ pavan4957@gmail.com\\  \vspace*{0.1cm}\\ Noor Mohammad Khan\\nm\_khan123@yahoo.co.in}
\vspace{1.5cm}\\
\textbf{\large Abstract}\\
In this study, linear codes having their Lee-weight distributions over the semi-local ring $\mathbb{F}_{q}+u\mathbb{F}_{q}$ with $u^{2}=1$ are constructed using the defining set and Gauss sums for an odd prime $q $. Moreover, we derive complete Hamming-weight enumerators for the images of the constructed linear codes under the Gray map. We finally show an application to secret sharing schemes.
 \\\\
{\bf{\large Keywords}}  Linear code; Gauss sum; Weight distribution; Complete weight enumerator; Secret sharing scheme\\\\
{\bf{\large Mathematics Subject Classification (2010)}} 94B05; 11T71

\section{Introduction}
  The study of linear codes with few weights has contributed significantly to various fields such as communications, data storage, and consumer electronics, leading to significant advancements in their study over finite fields. Researchers have focused on developing linear codes with few weights specifically for secret sharing \cite{ADHK98, CDY05}, association schemes \cite{CG84}, and authentication codes \cite{DW05}.

Throughout the paper, for some positive integer $m$ and an odd prime $q$, let $\mathbb{F}_{q^{m}}$ be a finite field having $q^{m}$ elements. A subspace of $\mathbb{F}_{q}^{n}$ is referred to as a linear code $\mathcal{C}$ over the field $\mathbb{F}_{q}$, where $n$ is the length of the code $\mathcal{C}$. A linear code $\mathcal{C}$ over $\mathbb{F}_{q}$ having length $n$, dimension $k$ and minimum Hamming-distance $d$ is usually known as $[n,k,d]$ linear code. The \emph{weight enumerator} of the code $\mathcal{C}$ is a polynomial defined by
$$1+f_{1}x+f_{2}x^{2}+\cdots+f_{n}x^{n},$$
where $f_{i}~(1\leq i\leq n)$ represents the number of codewords of weight $i$ in $\mathcal{C}$. Moreover, the $(n+1)$-tuple $(1,f_{1},f_{2},\ldots,f_{n})$ is known as the weight distribution of the code $\mathcal{C}$. In coding theory, the weight enumerator of a linear code $\mathcal{C}$ is an important topic, as it provides insights into error-detection and error-correction probabilities, as well as the error-correcting capability with respect to various algorithms \cite{DD15}. Throughout this paper, $\#S$ denotes the cardinality of any set $S$. A code $\mathcal{C}$ satisfying the equation $$\#\{i:f_{i}\neq0, 1\leq i\leq n\}=t$$ is known as $t$-weight code.

 Let $\mathbb{F}_{q}=\{w_{0},w_{1},\ldots, w_{q-1}\}$, where $w_{0}$ denotes the additive identity of the field. Let $x=(x_{0}, x_{1},\ldots, x_{n-1})$ be any vector in $\mathbb{F}_{q}^{n}$. Then the composition of $x$, denoted by comp($x$), is represented by $q$-tuple $(n_{0}, n_{1},\ldots, n_{q-1}),$
where $n_{j}$ is defined by $$n_{j}=\#\{x_{i}:x_{i}=w_{j}(0\leq i\leq n-1)\}.$$ Let $A(n_{0},n_{1},\ldots,n_{q-1})=\#\{c\in\mathcal{C}:\text{com}(c)=(n_{0},n_{1},\ldots,n_{q-1})\}$ and $\mathbb{P}_{n}=\{(n_{0},n_{1},\ldots,n_{q-1}):0\leq n_{j}\leq n,\sum_{j=0}^{q-1}n_{j}=n\}$. The complete weight enumerator of the code $\mathcal{C}$ which demonstrates the frequency of each symbol appearing in each codeword  is the polynomial defined  as follows:
\begin{align*}
\text{CWE($\mathcal{C}$)}&=\sum_{c\in\mathcal{C}}w_{0}^{n_{0}}w_{1}^{n_{1}}\cdots w_{q-1}^{n_{q-1}}\\
&=\sum_{(n_{0},n_{1},\ldots,n_{q-1})\in\mathbb{P}_{n}} A(n_{0},n_{1},\ldots,n_{q-1})w_{0}^{n_{0}}w_{1}^{n_{1}}\cdots w_{q-1}^{n_{q-1}}.
\end{align*}
The study of linear codes with their complete weight enumerators has found numerous applications in communication technologies, leading to extensive research in this area \cite{KK21,LBAYY16,AKL17,YY17,YY172,WGF17,LL18}. These weight enumerators play a crucial role in calculating the deception probability of authentication codes, as demonstrated by Ding et al. in \cite{DHKW07,DW05}. 

Let $\mathbb{F}_{q^{m}}^{*}=\mathbb{F}_{q^{m}}\setminus \{0\}$ and $D=\{d_{1},d_{2},\ldots,d_{n}\}\subset\mathbb{F}_{q^{m}}^{*}$, and let $\mathrm{Tr}$ denote the absolute trace function from $\mathbb{F}_{q^{m}}$ onto $\mathbb{F}_{q}$. The set $\mathcal{C}_{D}$ defined by the following equation \eqref{eq1} forms a linear code of length $n$ over $\mathbb{F}_{q}$ 
\begin{equation}\label{eq1}
\mathcal{C}_{D}=\{(\mathrm{Tr}(xd_{1}),\mathrm{Tr}(xd_{2}),\ldots,\mathrm{Tr}(xd_{n})):x\in\mathbb{F}_{q^{m}}\},
\end{equation}
where the set $D$ is known as the defining set of this code $\mathcal{C}_{D}$. First ever such codes were constructed by Ding et al. \cite{DD15}. Applicable linear codes can be effectively constructed by strategically choosing the defining set, as discussed in \cite{DD15,KK24,KK21}.

In the 1990s, modules over finite rings became influential in coding theory, impacting diverse fields such as number theory (unimodular lattices \cite{A98}), engineering (low correlation sequences \cite{BGO07}), and combinatorics (designs \cite{AM74}). Nechaev \cite{N89} and Hammons et al. \cite{HKCSS94} made significant contributions by establishing the connection between linear codes over the ring \(\mathbb{Z}_4\) and nonlinear binary codes like the Kerdock \cite{K17} and Preparata \cite{P68} codes. Codes over finite rings typically possess more codewords compared to those over finite fields of the same length. For instance, the Preparata code, despite having the same length as the extended double error-correcting BCH code, contains twice as many codewords as the BCH code \cite{HKCSS94}. This illustrates the richness and potential of codes over finite rings in expanding the capacity and efficiency of coding systems.
 
Let  $\mathcal{R}=\mathbb{F}_{q}+u\mathbb{F}_{q}$ be a ring, where $u$ is an indeterminant such that $\mathbb{F}_{q}+u\mathbb{F}_{q}$ forms a ring under usual addition and multiplication. An $\mathcal{R}$-submodule of $\mathcal{R}^{n}$ is called a linear code $\mathcal{C}$ of length $n$ over $\mathcal{R}=\mathbb{F}_{q}+u\mathbb{F}_{q}$. Assume that $\mathcal{R}_{m}=\mathbb{F}_{q^{m}}+u\mathbb{F}_{q^{m}}$ be a ring extension of $\mathcal{R}~(=\mathbb{F}_{q}+u\mathbb{F}_{q})$ with degree $m$.  Based on approach in \eqref{eq1}, one can find linear codes with their weight distributions over the ring $\mathcal{R}$, but in this case $D=\{d_{1},d_{2},\ldots,d_{n}\}\subset\mathcal{R}_{m}$, and $\text{tr}:\mathcal{R}_{m} \rightarrow \mathcal{R}$ is a linear function  \cite{P68,SLS16,LL19, ZZG24, GZMF23}.

The Lee-weight distributions of linear codes over the finite non-chain ring $\mathbb{F}_{q}+u\mathbb{F}_{q}$, with $u^{2}=1$, are examined in this study. Also, we have determined the complete Hamming-weight enumerators of the images of constructed linear codes under the Gray map. The ring and the defining set that we have considered to construct codes are more general than that  in \cite{LL19}. Moreover, in Remark \ref{re1}, we have shown that our code has improved relative minimum distance compared to a code in \cite{LL19}.

The complete study is organized as follows: In Section \ref{Sec.2}, we provide an overview of definitions and results concerning group characters and Gauss sums over finite fields. Additionally, we present several propositions regarding the ring $\mathbb{F}_{q} + u\mathbb{F}_{q}$. Section \ref{Sec.3} contains auxiliary results that are crucial for proving the main results. The main results themselves are established in Section \ref{Sec.4}. Section \ref{Sec.5} explores an application of these results in a secret sharing scheme. Finally, Section \ref{Sec.6} summarizes our findings and concludes the work.

 \section{Preliminaries}\label{Sec.2}
$\mathbb{F}_{q}+u\mathbb{F}_{q}$ be a semi-local ring with the natural addition and multiplication with the property $u^{2}=1$. The ring $\mathbb{F}_{q}+u\mathbb{F}_{q}$, with $u^{2}=1$,  has the following two maximal ideals 
\begin{equation*}
I_{1}=\langle u-1 \rangle=\{(r+us)(u-1):r,s\in\mathbb{F}_{q}\};
\end{equation*}
\begin{equation*}
	I_{2}=\langle u+1 \rangle=\{(r+us)(u+1):r,s\in\mathbb{F}_{q}\}.
\end{equation*}
 It can easily be seen that both $\mathcal{R}/I_{1}$ and  $\mathcal{R}/I_{2}$ are isomorphic to $\mathbb{F}_{q.}$
Let $q$ be an odd prime and $m\geq 2$ be positive integer. We construct the ring extension $\mathcal{R}_{m}=\mathbb{F}_{q^{m}}+u\mathbb{F}_{q^{m}}$ of $\mathcal{R}~(=\mathbb{F}_{q}+u\mathbb{F}_{q})$ with degree $m$. The \emph{Trace function} from $\mathbb{F}_{q^{m}}+u\mathbb{F}_{q^{m}}$ onto $\mathbb{F}_{q}+u\mathbb{F}_{q}$ is defined as
$$\mathrm{tr}=\sum_{j=o}^{m-1}F^{j},$$
where $F$ is the operator which maps $r+us$ to $r^{q}+us^{q}$ and known as Frobenius operator.
For all $r, s\in\mathbb{F}_{q^{m}}$, it can easily be checked that
$$\mathrm{tr}(r+us)=\mathrm{Tr}(r)+u\mathrm{Tr}(s),$$  where $\mathrm{Tr}$ stands for absolute trace function from $\mathbb{F}_{q^{m}}$ onto $\mathbb{F}_{q}$. The one to one map $\phi$ from $\mathcal{R}$ to $\mathbb{F}_{q}^{2}$ defined by
$$ \phi(r+us)=(r,s),\quad\text{for all } r, s\in\mathbb{F}_{q}^{n},$$ is known as Gray map.
This map can naturally be extended into a map from $\mathcal{R}^{n}$ to $\mathbb{F}_{q}^{2n}$. The \emph{Lee-weight} of $r'+us'\in \mathcal{R}$ is defined as the \emph{Hammimg-weight} of its Gray image i.e,
$$w_{L}(r'+us')=w_{H}(r')+w_{H}(s'),\quad\text{for all } r', s'\in\mathbb{F}_{q}^{n}.$$
Let $x, y\in \mathcal{R}^{n}$.  The \emph{Lee-distance} between $x$ and $y$, which is denoted by $d_{L}(x,y)$, is the \emph{Lee-weight} of $(x-y)$ i.e., $d_{L}(x,y)=w_{H}(x-y)$. So, one can easily check that the Gray map is, by construction, a linear isometry from $(\mathcal{R}^{n}, d_{L}) $ to $(\mathbb{F}_{q}^{2n}, d_{H})$.

We now give certain group character and Gauss sum results for usage in the future (for further information, see \cite{LN97}).

Let $S^{1}=\{z\in\mathbb{C}: |z|=1\}$, where $\mathbb{C}$ stands for the set of complex numbers. It is easy to verify that $S^{1}$ is a group under usual multiplication of complex numbers. A mapping $\chi: \mathbb{F}_{q^{m}}\rightarrow S^{1}$ defined by  $\chi(r_{1} + r_{2})=\chi(r_{1})\chi(r_{2})$ for all  $ r_{1},  r_{2}\in \mathbb{F}_{q^{m}} $ is called an additive character  of $\mathbb{F}_{q^{m}}$ \cite{LN97}.

Let $ a\in  \mathbb{F}_{q^{m}} $. Then, by Theorem 5.7 in \cite{LN97},
\begin{equation}\label{2.1}
\chi_{a}(r)= \zeta_{q}^{\mathrm{Tr}(ar)},\quad \text{for  all}~r\in \mathbb{F}_{q^{m}},
\end{equation}
defines an additive character of $ \mathbb{F}_{q^{m}} $, where $\zeta_{q}=e^{\frac{2\pi\sqrt{-1}}{q}}$, and $\{\chi_{a}\}_{a\in \mathbb{F}_{q^{m}}}$ is the only collection of all additive characters which can be defined on  $\mathbb{F}_{q^{m}}$. A trivial character, denoted by $\chi_{0}$,  is an additive character whose range contains only the identity of $S^{1}$ i.e., $ \chi_{0}(r)=1 $ for all $ r\in \mathbb{F}_{q^{m}}$. Whereas, the characters $\chi_{a}$, where $a\neq0$, are known as nontrivial characters. The canonical additive character of $ \mathbb{F}_{q^{m}}$  is obtained by replacing $a$ by $1$ in \eqref{2.1}\cite{LN97}.\\
By Theorem 5.4 in \cite{LN97}, the orthogonal property of additive character of $\mathbb{F}_{q^{m}}$ is defined as 
\begin{equation}
\sum_{r\in \mathbb{F}_{q^{m}}}\chi(r)=
\begin{cases}
q^{m},&\text{if $\chi$ trivial},\\
0,&\text{if $\chi$ non-trivial}.
\end{cases}
\end{equation}
The multiplicative character of $ \mathbb{F}_{q^{m}} $ is the character of its multiplicative group $ \mathbb{F}^{*}_{q^{m}}$. Let $g$ be a generator of $ \mathbb{F}^{*}_{q^{m}} $. Then by Theorem 5.8 in \cite{LN97}, for each $i=0, 1, \ldots, q^{m}-2$, the function $ \psi_{i} $ with
\begin{equation*}
	\psi_{i}(g^{j})=e^{\frac{2\pi\sqrt{-1}ij}{q^{m}-1}} \quad\text{for}\  j=0, 1,\ldots, q^{m}-2
\end{equation*}
defines a multiplicative character of $ \mathbb{F}_{q^{m}} $. For $ i=\frac{q^{m}-1}{2} $, we have the quadratic character $ \eta=\psi_{\frac{q^{m}-1}{2}} $ defined by
$$\eta(g^{k})=
\begin{cases}
-1, &\text{if  $ 2\nmid k$},\\
1,&\text{if $ 2\mid k$}. 
\end{cases}
$$
In the sequel, we shall assume that $ \eta(0)=0 $. Quadratic Gauss sum over $ \mathbb{F}_{q^{m}} $, denoted by $ G=G(\eta, \chi_{1})$, is defined as 
\begin{equation*}
	G(\eta, \chi_{1})=\sum_{r\in \mathbb{F}_{q^{m}}^{*}}\eta(r)\chi_{1}(r),
\end{equation*}
where $\chi_{1} $ and $\eta$  denote the canonical  and quadratic characters of $ \mathbb{F}_{q^{m}} $, respectively. Similarly, quadratic Gauss sum over $ \mathbb{F}_{q} $, denoted by
 $ \overline{G}=G(\overline{\eta}, \overline{\chi}_{1}) $, is defined as 
 \begin{equation*}
 	G(\overline{\eta}, \overline{\chi_{1}})=\sum_{r'\in \mathbb{F}_{q}^{*}}\overline{\eta}(r')\overline{\chi}_{1}(r'), 
 \end{equation*}
where $ \overline{\chi}_{1} $ and $ \overline{\eta} $ denote the canonical and quadratic characters of $ \mathbb{F}_{q} $, respectively.
The following lemma gives the explicit quadratic Gauss sum values.
\begin{lemma}\label{p3le1}
	{\em \cite[Theorem 5.15]{LN97}} Let the symbols have the same meanings as before. Then
	$$G(\eta, \chi_{1})=(-1)^{m-1}\sqrt{-1}^{\frac{(q-1)^{2}m}{4}}\sqrt{q^{m}},~ G(\overline{\eta}, \overline{\chi}_{1})=\sqrt{-1}^{\frac{(q-1)^{2}}{4}}\sqrt{q}.$$
\end{lemma}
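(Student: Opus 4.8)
The plan is to reduce both identities to the single classical evaluation of the quadratic Gauss sum over the prime field $\mathbb{F}_q$, and then to establish that evaluation. For the reduction, note that $\chi_1 = \overline{\chi}_1\circ\mathrm{Tr}$ is by definition the lift of $\overline{\chi}_1$ along the trace, and that $\eta = \overline{\eta}\circ N_{\mathbb{F}_{q^m}/\mathbb{F}_q}$: since the norm is surjective onto $\mathbb{F}_q^{*}$, the composite $\overline{\eta}\circ N$ is a nontrivial character of order two on $\mathbb{F}_{q^m}^{*}$, hence equals $\eta$. The Davenport--Hasse lifting relation for Gauss sums then gives $G(\eta,\chi_1) = (-1)^{m-1}\,\overline{G}^{\,m}$. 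Substituting the claimed value of $\overline{G}$ yields $G(\eta,\chi_1) = (-1)^{m-1}\sqrt{-1}^{\,(q-1)^2 m/4}\sqrt{q^m}$, so the first formula follows from the second. All the substance therefore lies in evaluating $\overline{G}$.

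For $\overline{G}$, the first step is to recast it in the ``quadratic'' shape: using $1+\overline{\eta}(c) = \#\{x\in\mathbb{F}_q: x^2=c\}$ for $c\neq0$, $\overline{\eta}(0)=0$, and the orthogonality relation $\sum_{c\in\mathbb{F}_q}\overline{\chi}_1(c)=0$, one checks that $\overline{G} = \sum_{c\in\mathbb{F}_q}\zeta_q^{c^2}$. Next I would pin down $\overline{G}$ up to sign by two elementary character-sum computations. Expanding $\lvert\overline{G}\rvert^2 = \overline{G}\cdot\overline{\overline{G}}$ as a double sum over $\mathbb{F}_q^{*}$, substituting $a = bt$, and applying additive orthogonality to the inner sum over $b$, one obtains $\lvert\overline{G}\rvert^2 = q$. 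Separately, because $\overline{\eta}$ takes values in $\{\pm1\}$, complex conjugation gives $\overline{\overline{G}} = \overline{\eta}(-1)\,\overline{G}$, so $\overline{G}^2 = \overline{\eta}(-1)\,q = (-1)^{(q-1)/2}q$. Hence $\overline{G} = \pm\sqrt{q}$ when $q\equiv1\pmod{4}$ and $\overline{G} = \pm\sqrt{-1}\,\sqrt{q}$ when $q\equiv3\pmod{4}$; in both cases $\overline{G} = \pm\,\sqrt{-1}^{\,(q-1)^2/4}\sqrt{q}$.

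The one genuinely hard step --- the reason this is a theorem and not an exercise --- is showing that the sign is $+1$; this is the classical Gauss sign theorem, and I would prove it by Schur's determinant argument. Form the $q\times q$ matrix $M = \big(\zeta_q^{jk}\big)_{0\le j,k\le q-1}$, so that $\mathrm{tr}(M) = \overline{G}$. A one-line computation gives $M^2 = qP$, where $P$ is the permutation matrix of $j\mapsto -j\pmod q$; thus $M^4 = q^2 I$ and the eigenvalues of $M$ lie in $\{\sqrt{q},\,-\sqrt{q},\,\sqrt{-1}\sqrt{q},\,-\sqrt{-1}\sqrt{q}\}$ with certain multiplicities $a,b,c,d$. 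Then $a+b+c+d=q$ and $\mathrm{tr}(M^2)=q\,\mathrm{tr}(P)=q$ force $a+b=(q+1)/2$, $c+d=(q-1)/2$; the square relation $\overline{G}^2=(-1)^{(q-1)/2}q$ from the previous paragraph shows $\overline{G}$ is real when $q\equiv1\pmod{4}$ and purely imaginary when $q\equiv3\pmod{4}$; and evaluating $\det M$ as the Vandermonde product $\prod_{0\le j<k\le q-1}(\zeta_q^{k}-\zeta_q^{j})$ in closed form fixes the remaining sign, giving $\mathrm{tr}(M)=\overline{G}=\sqrt{-1}^{\,(q-1)^2/4}\sqrt{q}$. (One could equally invoke Dirichlet's analytic evaluation via the theta-function transformation formula, or Gauss's original product formula; any of these supplies the missing sign.) I expect essentially all the difficulty to be concentrated in this last step, since everything up to the determination of the sign is only a few lines of manipulation with additive and multiplicative characters.
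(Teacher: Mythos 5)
The paper does not prove this lemma at all --- it is quoted verbatim from \cite[Theorem 5.15]{LN97} --- so there is no in-paper argument to compare against. Your proof is correct and follows the standard route to that theorem: the Davenport--Hasse lifting relation (with the correct identification $\eta=\overline{\eta}\circ N$, justified by surjectivity of the norm and uniqueness of the order-two character of a cyclic group) reduces everything to $\overline{G}$; the conjugation argument gives $\overline{G}^{\,2}=(-1)^{(q-1)/2}q$; and Schur's matrix $M=\big(\zeta_q^{jk}\big)$ with $M^{2}=qP$, $\mathrm{trace}(P)=1$ for odd $q$, combined with the Vandermonde evaluation of $\det M$, supplies the sign. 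If you write this out in full, the only detail worth making explicit is the congruence $\big((q-1)/2\big)^{2}\equiv 0$ or $1\pmod 4$ according as $q\equiv 1$ or $3\pmod 4$, which is what collapses the two cases of the classical statement into the single exponent $\sqrt{-1}^{\,(q-1)^{2}/4}$ appearing in the lemma; that is routine bookkeeping and not a gap.
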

\begin{lemma}\label{p3le2}
	{\em \cite[Lemma 2.2]{LL19}} Let the symbols continue to mean what they did before. When $m\geq 2$ is even, $\eta(y)=1 $ for each $y$ in $\mathbb{F}^{*}_{q}$, and when $m \geq 2$ is odd, $\eta(y)=\overline{\eta}(y) $ for each $y\in \mathbb{F}^{*}_{q} $.
	\end{lemma}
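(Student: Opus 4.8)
The plan is to work directly with a generator of $\mathbb{F}_{q^m}^*$ and to track parities of discrete logarithms, reducing the whole statement to an elementary parity count. Fix a generator $g$ of $\mathbb{F}_{q^m}^*$ as in the definition of $\eta$ recalled above, and set $N := \frac{q^m-1}{q-1} = 1 + q + q^2 + \cdots + q^{m-1}$, a positive integer dividing $q^m-1$. First I would observe that $\gamma := g^{N}$ has multiplicative order $(q^m-1)/N = q-1$, and that $\gamma^{q-1} = 1$ forces $\gamma$ to be a root of $x^{q-1}-1$, hence $\gamma \in \mathbb{F}_q^*$; since $\gamma$ has order exactly $q-1$, it is in fact a generator of $\mathbb{F}_q^*$. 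Thus every $y \in \mathbb{F}_q^*$ can be written $y = \gamma^{k} = g^{kN}$ for some integer $k$.

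Next I would evaluate both quadratic characters on such a $y$. By the definition of the quadratic character of $\mathbb{F}_{q^m}$, $\eta(y) = \eta(g^{kN}) = (-1)^{kN}$; likewise, since $\gamma$ generates $\mathbb{F}_q^*$, the quadratic character of $\mathbb{F}_q$ satisfies $\overline{\eta}(y) = \overline{\eta}(\gamma^{k}) = (-1)^{k}$. It then remains only to compare $(-1)^{kN}$ with $(-1)^{k}$, i.e.\ to determine the parity of $N$.

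The final step, the parity count, is immediate: $q$ is odd, so every summand $q^i$ in $N = 1 + q + \cdots + q^{m-1}$ is odd, and there are exactly $m$ of them, whence $N \equiv m \pmod{2}$. Consequently, if $m$ is even then $N$ is even and $\eta(y) = (-1)^{kN} = 1$ for every $y \in \mathbb{F}_q^*$; if $m$ is odd then $N$ is odd and $\eta(y) = (-1)^{kN} = (-1)^{k} = \overline{\eta}(y)$. This establishes both assertions. Alternatively, one can avoid choosing $g$ explicitly and argue via Euler's criterion, writing $\eta(y) = y^{(q^m-1)/2} = \big(y^{(q-1)/2}\big)^{N} = \overline{\eta}(y)^{N}$ for $y \in \mathbb{F}_q^*$ and then invoking $N \equiv m \pmod{2}$.

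There is no real obstacle here; the lemma is elementary. The only points requiring a little care are the two structural facts used at the outset — that $g^{N}$ generates the subfield's multiplicative group $\mathbb{F}_q^*$, and that the exponent $N$ is precisely the index $[\mathbb{F}_{q^m}^*:\mathbb{F}_q^*]$ — together with being careful to apply the correct quadratic character in each field. Once those are in place, everything collapses to the observation $N \equiv m \pmod{2}$.
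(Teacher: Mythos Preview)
Your proof is correct. The paper itself does not prove this lemma at all: it is stated with a citation to \cite[Lemma~2.2]{LL19} and no argument is given. Your discrete-logarithm approach (and the equivalent Euler-criterion variant you mention) is the standard elementary proof; the reduction to the parity of $N = 1 + q + \cdots + q^{m-1}$ is exactly the right observation, and your justification that $g^{N}$ generates $\mathbb{F}_q^{*}$ is clean. There is nothing to compare against in the paper beyond noting that you have supplied a self-contained proof where the authors chose to cite one.
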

\begin{lemma}\label{p3le3}
	{\em\cite[Theorem 5.33]{LN97}} Let $ \chi $ be a non trivial additive character of $ \mathbb{F}_{q^{m}} $, and let $f(x)=b_{2}x^{2} + b_{1}x + b_{0}\in \mathbb{F}_{q^{m}}[x]$ with $ b_{2}\neq 0 $. Then
	$$\sum_{x\in \mathbb{F}_{q^{m}}}\chi(f(x))=\chi(b_{0} - b_{1}^{2}(4b_{2})^{-1})\eta(b_{2})G(\eta, \chi).$$
\end{lemma}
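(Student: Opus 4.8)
The plan is to reduce the general quadratic exponential sum to a \emph{pure} quadratic Gauss sum by completing the square, and then to evaluate $\sum_{y\in\mathbb{F}_{q^{m}}}\chi(b_{2}y^{2})$ by counting square roots via the quadratic character $\eta$.

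\textbf{Step 1 (completing the square).} Since $q$ is odd, $2$ and hence $4b_{2}$ are units in $\mathbb{F}_{q^{m}}$, so $f(x)=b_{2}\bigl(x+b_{1}(2b_{2})^{-1}\bigr)^{2}+\bigl(b_{0}-b_{1}^{2}(4b_{2})^{-1}\bigr)$. Because $\chi$ is multiplicative on sums and $y=x+b_{1}(2b_{2})^{-1}$ runs over all of $\mathbb{F}_{q^{m}}$ as $x$ does, this gives $\sum_{x\in\mathbb{F}_{q^{m}}}\chi(f(x))=\chi\bigl(b_{0}-b_{1}^{2}(4b_{2})^{-1}\bigr)\sum_{y\in\mathbb{F}_{q^{m}}}\chi(b_{2}y^{2})$.

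\textbf{Step 2 (the pure sum).} It remains to prove $\sum_{y\in\mathbb{F}_{q^{m}}}\chi(b_{2}y^{2})=\eta(b_{2})G(\eta,\chi)$. I would use that, since $q$ is odd, the map $y\mapsto y^{2}$ is two-to-one onto the nonzero squares, so for each $z\in\mathbb{F}_{q^{m}}$ the number of $y$ with $y^{2}=z$ equals $1+\eta(z)$ (with the convention $\eta(0)=0$). Hence $\sum_{y}\chi(b_{2}y^{2})=\sum_{z\in\mathbb{F}_{q^{m}}}(1+\eta(z))\chi(b_{2}z)=\sum_{z}\chi(b_{2}z)+\sum_{z}\eta(z)\chi(b_{2}z)$. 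The first sum is $0$ by the orthogonality relation for additive characters, because $b_{2}\neq0$ makes $z\mapsto\chi(b_{2}z)$ a nontrivial additive character. In the second sum, the substitution $w=b_{2}z$ together with $\eta(b_{2}^{-1})=\eta(b_{2})$ (as $\eta$ takes values in $\{\pm1\}$) yields $\sum_{z}\eta(z)\chi(b_{2}z)=\eta(b_{2})\sum_{w\in\mathbb{F}_{q^{m}}^{*}}\eta(w)\chi(w)=\eta(b_{2})G(\eta,\chi)$. Substituting this back into Step 1 completes the argument.

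\textbf{Main obstacle.} There is no deep difficulty; the one point requiring care is the square-root count $\#\{y:y^{2}=z\}=1+\eta(z)$, which genuinely uses that $q$ is odd (in characteristic $2$ squaring is a bijection and the formula changes), together with the routine verification that completing the square and the substitution $w=b_{2}z$ are honest bijections of the relevant sets. Once these are in place, everything follows from the orthogonality of additive characters and the definition of $G(\eta,\chi)$.
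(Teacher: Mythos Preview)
Your argument is correct and is precisely the standard proof of this identity (completing the square followed by the square-root count $\#\{y:y^{2}=z\}=1+\eta(z)$ and orthogonality). The paper does not supply its own proof of this lemma at all; it is quoted verbatim as \cite[Theorem~5.33]{LN97}, so there is nothing to compare beyond noting that your write-up reproduces the classical argument found there.
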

\begin{lemma}\label{p3le4}
{\em \cite[Lemma 9]{DD15}} For $s\in\mathbb{F}_{q}$,   let $N_{s}=\# \{\alpha\in\mathbb{F}_{q^{m}} : \mathrm{Tr}(\alpha^{2})=s\}.$ Then
$$N_{s}=\begin{cases}
q^{m-1}+q^{-1}(q-1)G,&\text{if $s=0$ and $2\mid m$},\\
q^{m-1},&\text{if $s=0$ and $2\nmid m$},\\
q^{m-1}+q^{-1}\overline{\eta}(-s)G\overline{G},&\text{if $s\neq0$ and $2\nmid m$},\\
q^{m-1}-q^{-1}G,&\text{if $s\neq0$ and $2\mid m$}.
\end{cases} $$
\end{lemma}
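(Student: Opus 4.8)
The plan is to detect the constraint $\mathrm{Tr}(\alpha^{2})=s$ by the orthogonality of additive characters of $\mathbb{F}_{q}$ and then to reduce everything to the quadratic Gauss sums $G$ and $\overline{G}$. Summing $\zeta_{q}^{y(\mathrm{Tr}(\alpha^{2})-s)}$ over $y\in\mathbb{F}_{q}$ and splitting off the $y=0$ term gives
\begin{equation*}
N_{s}=\frac{1}{q}\sum_{\alpha\in\mathbb{F}_{q^{m}}}\sum_{y\in\mathbb{F}_{q}}\zeta_{q}^{y(\mathrm{Tr}(\alpha^{2})-s)}=q^{m-1}+\frac{1}{q}\sum_{y\in\mathbb{F}_{q}^{*}}\zeta_{q}^{-ys}\sum_{\alpha\in\mathbb{F}_{q^{m}}}\zeta_{q}^{\mathrm{Tr}(y\alpha^{2})},
\end{equation*}
where I have used that $\mathrm{Tr}$ is $\mathbb{F}_{q}$-linear, so $y\,\mathrm{Tr}(\alpha^{2})=\mathrm{Tr}(y\alpha^{2})$ for $y\in\mathbb{F}_{q}$.

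Next I would evaluate the inner sum $\sum_{\alpha}\zeta_{q}^{\mathrm{Tr}(y\alpha^{2})}=\sum_{\alpha}\chi_{1}(y\alpha^{2})$ by applying Lemma \ref{p3le3} with $f(x)=yx^{2}$ (so $b_{2}=y\neq0$ and $b_{1}=b_{0}=0$), which gives $\eta(y)G$. Hence $N_{s}=q^{m-1}+\frac{G}{q}\,T$ with $T:=\sum_{y\in\mathbb{F}_{q}^{*}}\eta(y)\zeta_{q}^{-ys}$, and the task reduces to computing $T$ in the four cases. The crucial input here is Lemma \ref{p3le2}, which describes $\eta$ restricted to $\mathbb{F}_{q}^{*}$: it is identically $1$ when $m$ is even and coincides with $\overline{\eta}$ when $m$ is odd.

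For $s=0$: if $m$ is even then $T=\sum_{y\in\mathbb{F}_{q}^{*}}1=q-1$, whereas if $m$ is odd then $T=\sum_{y\in\mathbb{F}_{q}^{*}}\overline{\eta}(y)=0$ since $\overline{\eta}$ is a nontrivial multiplicative character. For $s\neq0$: if $m$ is even then $T=\sum_{y\in\mathbb{F}_{q}^{*}}\overline{\chi}_{1}(-ys)=-1$ by orthogonality of the additive characters of $\mathbb{F}_{q}$ (the full sum over $y\in\mathbb{F}_{q}$ vanishes because $-s\neq0$, and the $y=0$ term contributes $1$). The only genuinely computational case is $m$ odd and $s\neq0$, where $T=\sum_{y\in\mathbb{F}_{q}^{*}}\overline{\eta}(y)\overline{\chi}_{1}(-ys)$; the substitution $z=-ys$ together with $\overline{\eta}(y)=\overline{\eta}(-s^{-1})\overline{\eta}(z)=\overline{\eta}(-s)\overline{\eta}(z)$ turns this into $\overline{\eta}(-s)\sum_{z\in\mathbb{F}_{q}^{*}}\overline{\eta}(z)\overline{\chi}_{1}(z)=\overline{\eta}(-s)\overline{G}$.

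Substituting these four values of $T$ into $N_{s}=q^{m-1}+\frac{G}{q}\,T$ yields exactly the four claimed expressions. The only point requiring care is the bookkeeping with $\eta$ versus $\overline{\eta}$ on $\mathbb{F}_{q}$ and the change of variables that exposes the Gauss sum $\overline{G}$; beyond that there is no real obstacle once Lemmas \ref{p3le2} and \ref{p3le3} are in hand.
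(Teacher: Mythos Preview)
Your proof is correct. Note that the paper does not actually prove Lemma~\ref{p3le4} at all: it is quoted verbatim from \cite[Lemma~9]{DD15}, and the adjacent Lemma~\ref{p3le5} (labelled ``Proof of Lemma~9'' in \cite{DD15}) records precisely the inner sum $\sum_{x\in\mathbb{F}_{q}^{*}}\sum_{a\in\mathbb{F}_{q^{m}}}\zeta_{q}^{x(\mathrm{Tr}(a^{2})-s)}=G\cdot T$ that you compute, so your argument is exactly the standard one underlying the cited result.
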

\begin{lemma}\label{p3le5}
	{\em \cite[Proof of Lemma 9]{DD15}} Let $ s\in\mathbb{F}_{q}$. Then
	$$\sum_{x\in\mathbb{F}_{q}^{*}}\sum_{a\in\mathbb{F}_{q^{m}}}\zeta_{q}^{x(\mathrm{Tr}(a^{2})-s)}=\begin{cases}
	(q-1)G,&\text{if $s=0$ and $2\mid m$},\\
	0,&\text{if $s=0$ and $2\nmid m$},\\
	\overline{\eta}(-s)G\overline{G},&\text{if $s\neq0$ and $2\nmid m$},\\
	-G,&\text{if $s\neq0$ and $2\mid m$}.
	\end{cases} $$
\end{lemma}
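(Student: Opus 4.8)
The plan is to decouple the two summations: fix $x\in\mathbb{F}_q^*$ and evaluate the inner sum over $a\in\mathbb{F}_{q^m}$ as a Weil sum for a quadratic polynomial. Since $x\in\mathbb{F}_q$ and $\mathrm{Tr}$ is $\mathbb{F}_q$-linear we have $\zeta_q^{x\mathrm{Tr}(a^2)}=\zeta_q^{\mathrm{Tr}(xa^2)}=\chi_1(xa^2)$, where $\chi_1$ is the canonical additive character of $\mathbb{F}_{q^m}$. Factoring out $\zeta_q^{-xs}$ and applying Lemma \ref{p3le3} to $f(a)=xa^2$ (so $b_2=x\neq 0$, $b_1=b_0=0$) gives $\sum_{a\in\mathbb{F}_{q^m}}\chi_1(xa^2)=\eta(x)G$. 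Hence the target double sum equals $G\sum_{x\in\mathbb{F}_q^*}\eta(x)\zeta_q^{-xs}$, and it remains to evaluate the character sum $T(s):=\sum_{x\in\mathbb{F}_q^*}\eta(x)\zeta_q^{-xs}$ in the four cases.

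For this I would invoke Lemma \ref{p3le2} to rewrite $\eta$ restricted to $\mathbb{F}_q^*$: it is identically $1$ when $m$ is even and equals $\overline{\eta}$ when $m$ is odd. If $m$ is even, $T(0)=\sum_{x\in\mathbb{F}_q^*}1=q-1$, while for $s\neq 0$ additive orthogonality over $\mathbb{F}_q$ gives $T(s)=\sum_{x\in\mathbb{F}_q^*}\zeta_q^{-xs}=-1$; multiplying by $G$ yields $(q-1)G$ and $-G$ respectively. If $m$ is odd, $T(0)=\sum_{x\in\mathbb{F}_q^*}\overline{\eta}(x)=0$, and for $s\neq 0$ the substitution $y=-xs$, using $\overline{\eta}(-ys^{-1})=\overline{\eta}(-s)\overline{\eta}(y)$ (recall $\overline{\eta}$ is $\pm 1$-valued, so $\overline{\eta}(s)^{-1}=\overline{\eta}(s)$), turns $T(s)$ into $\overline{\eta}(-s)\sum_{y\in\mathbb{F}_q^*}\overline{\eta}(y)\zeta_q^{y}=\overline{\eta}(-s)\overline{G}$; multiplying by $G$ gives $0$ and $\overline{\eta}(-s)G\overline{G}$. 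These are exactly the four claimed values.

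A shorter route, which is essentially how the cited proof in \cite{DD15} proceeds, is to note that additive orthogonality over $\mathbb{F}_q$ yields $qN_s=\sum_{x\in\mathbb{F}_q}\sum_{a\in\mathbb{F}_{q^m}}\zeta_q^{x(\mathrm{Tr}(a^2)-s)}=q^m+\sum_{x\in\mathbb{F}_q^*}\sum_{a\in\mathbb{F}_{q^m}}\zeta_q^{x(\mathrm{Tr}(a^2)-s)}$, so the sum in question is simply $qN_s-q^m$; substituting the four expressions for $N_s$ from Lemma \ref{p3le4} and simplifying reproduces the statement at once.

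The computations are all of the standard orthogonality and Gauss-sum type, so there is no serious obstacle; the one place to be careful is the odd-$m$, $s\neq 0$ case, where one must correctly carry the sign $\overline{\eta}(-1)$ through the change of variables and recognize the residual sum as the Gauss sum $\overline{G}$ itself rather than a conjugate or scalar multiple of it.
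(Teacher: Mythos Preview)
Your proposal is correct. The paper does not supply its own proof of this lemma---it simply cites the proof of Lemma~9 in \cite{DD15}---and your second route (writing the double sum as $qN_s-q^m$ and reading off the values from Lemma~\ref{p3le4}) is exactly that cited derivation; your first route via Lemma~\ref{p3le3} and Lemma~\ref{p3le2} is an equally valid direct computation.
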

\section{Auxiliary Results}\label{Sec.3}
In this section, we present various lemmas before providing the main results.
\begin{lemma}\label{p3le7}
	For $s, t\in\mathbb{F}_{q}$, let
	$$N(s,t)=\#\{(\alpha, \beta)\in\mathbb{F}_{q^{m}}\times\mathbb{F}_{q^{m}} : \mathrm{Tr}(\alpha^{2})=s~and~\mathrm{Tr}(\beta^{2})=t\}.$$
	Then, for odd $m$, we have
	$$N(s, t)=\begin{cases}
		q^{2m-2},&\text{if $s=0$ and $t=0$},\\
		q^{2m-2}+q^{m-2}\overline{\eta}(-s)G\overline{G},&\text{if $s\neq0$ and $t=0$},\\
		q^{2m-2}+q^{m-2}\overline{\eta}(-t)G\overline{G},&\text{if $s=0$ and $t\neq0$},\\
		q^{2m-2}+q^{m-2}\big(\overline{\eta}(-s)+\overline{\eta}(-t)\big)G\overline{G}+q^{-2}\overline{\eta}(st)G^{2}\overline{G}^{2},&\text{if $s\neq0$ and $t\neq0$};
	\end{cases}$$
	and, for even $m$, we have
	$$N(s, t)=\begin{cases}
		(q^{m-1}+q^{-1}(q-1)G)^{2},&\text{if $s=0$ and $t=0$},\\
		q^{2m-2}+q^{m-2}(q-2)G-q^{-2}(q-1)G^{2},&\text{if $s\neq0$ and $t=0$},\\
		q^{2m-2}+q^{m-2}(q-2)G-q^{-2}(q-1)G^{2},&\text{if $s=0$ and $t\neq0$},\\
		(q^{m-1}-q^{-1}G)^{2},&\text{if $s\neq0$ and $t\neq0$}.
	\end{cases}~~~~~~~~~~~~~~$$
\end{lemma}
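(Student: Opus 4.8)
The plan is to exploit the fact that the two constraints decouple: the pair $(\alpha,\beta)$ ranges over $\mathbb{F}_{q^{m}}\times\mathbb{F}_{q^{m}}$ with $\alpha$ and $\beta$ varying independently, and the condition $\mathrm{Tr}(\alpha^{2})=s$ involves only $\alpha$ while $\mathrm{Tr}(\beta^{2})=t$ involves only $\beta$. Hence $N(s,t)=N_{s}N_{t}$, where $N_{s}=\#\{\alpha\in\mathbb{F}_{q^{m}}:\mathrm{Tr}(\alpha^{2})=s\}$ is precisely the quantity evaluated in Lemma \ref{p3le4}. Everything then reduces to multiplying the two explicit expressions together.

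First I would separate the two parities of $m$. For odd $m$, Lemma \ref{p3le4} gives $N_{0}=q^{m-1}$ and $N_{s}=q^{m-1}+q^{-1}\overline{\eta}(-s)G\overline{G}$ for $s\neq0$; carrying out the product $N_{s}N_{t}$ in the four cases $(s,t)=(0,0),\,(\neq 0,0),\,(0,\neq 0),\,(\neq 0,\neq 0)$ yields the claimed formulas, the only non-mechanical point being that in the last case one invokes the multiplicativity of the quadratic character to write $\overline{\eta}(-s)\overline{\eta}(-t)=\overline{\eta}(st)$ in the $q^{-2}G^{2}\overline{G}^{2}$ term. For even $m$, Lemma \ref{p3le4} gives $N_{0}=q^{m-1}+q^{-1}(q-1)G$ and $N_{s}=q^{m-1}-q^{-1}G$ for $s\neq0$; here the $(0,0)$ and $(\neq 0,\neq 0)$ cases are literally $(q^{m-1}+q^{-1}(q-1)G)^{2}$ and $(q^{m-1}-q^{-1}G)^{2}$, and in the two mixed cases the cross terms combine as $q^{m-2}(q-1)G-q^{m-2}G=q^{m-2}(q-2)G$, together with the quadratic term $-q^{-2}(q-1)G^{2}$.

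There is no genuine obstacle here: once the factorization $N(s,t)=N_{s}N_{t}$ is noted, the proof is an application of Lemma \ref{p3le4} followed by an elementary expansion, the only care required being the case bookkeeping and the identity $\overline{\eta}(-s)\overline{\eta}(-t)=\overline{\eta}(st)$. If instead a self-contained derivation is desired, one can write $N(s,t)=q^{-2}\sum_{x,y\in\mathbb{F}_{q}}\sum_{\alpha,\beta\in\mathbb{F}_{q^{m}}}\zeta_{q}^{x(\mathrm{Tr}(\alpha^{2})-s)+y(\mathrm{Tr}(\beta^{2})-t)}$, split the outer double sum over $x,y$ according to whether each of $x,y$ is zero or nonzero, and evaluate the four resulting pieces using Lemma \ref{p3le5}; this route reproves Lemma \ref{p3le4} along the way but is otherwise the same computation.
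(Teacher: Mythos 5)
Your proposal is correct and matches the paper's approach: the paper's proof likewise consists of observing that the two conditions are independent, so $N(s,t)=N_{s}N_{t}$, and then citing Lemma \ref{p3le4}. Your write-up merely carries out the multiplication explicitly (including the identity $\overline{\eta}(-s)\overline{\eta}(-t)=\overline{\eta}(st)$ and the cross-term simplification $q^{m-2}(q-1)G-q^{m-2}G=q^{m-2}(q-2)G$), which the paper leaves to the reader.
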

\begin{proof} The proof is directly follows from the Lemma \ref{p3le4}.\end{proof}
\begin{lemma}\label{p3le9}
	For $\beta\in\mathbb{F}_{q}^{*}$ and $\lambda\in\mathbb{F}_{q}^{*},$
	let
	$$N_{1}=\sum_{a\in\mathbb{F}_{q^{m}}}\sum_{x\in\mathbb{F}_{q}^{*}}\zeta_{q}^{x\mathrm{Tr}(a^{2})}\sum_{z\in\mathbb{F}_{q}^{*}}\zeta_{q}^{z\mathrm{Tr}(\beta a)-z\lambda}.$$ Then
	$$N_{1}=\begin{cases}
		-G(q-1), & \text{if }2\mid m~\text{and}~\mathrm{Tr}(\beta^{2})=0,\\
		G, & \text{if }2\mid m~\text{and}~\mathrm{Tr}(\beta^{2})\neq0,\\
		0, & \text{if }2\nmid m~\text{and}~\mathrm{Tr}(\beta^{2})=0,\\
		-\overline{\eta}(-\mathrm{Tr}(\beta^{2}))G\overline{G}, & \text{if }2\nmid m~\text{and}~\mathrm{Tr}(\beta^{2})\neq0.\end{cases}$$
\end{lemma}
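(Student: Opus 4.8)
The plan is to interchange the order of summation, carry out the sum over $a$ first by recognizing it as a quadratic character sum, and then resolve the remaining sums over $x$ and $z$. First I would fix $x\in\mathbb{F}_{q}^{*}$ and $z\in\mathbb{F}_{q}^{*}$, write $\zeta_{q}^{x\mathrm{Tr}(a^{2})+z\mathrm{Tr}(\beta a)}=\chi_{1}(xa^{2}+z\beta a)$, and apply Lemma \ref{p3le3} to $f(a)=xa^{2}+(z\beta)a$ (legitimate since $x\neq0$ and $\chi_{1}$ is nontrivial) to obtain $\sum_{a\in\mathbb{F}_{q^{m}}}\chi_{1}(xa^{2}+z\beta a)=\chi_{1}(-z^{2}\beta^{2}(4x)^{-1})\eta(x)G$. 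Since $x,z\in\mathbb{F}_{q}$, linearity of the trace gives $\chi_{1}(-z^{2}\beta^{2}(4x)^{-1})=\zeta_{q}^{-z^{2}(4x)^{-1}\mathrm{Tr}(\beta^{2})}$, so that
$$N_{1}=G\sum_{x\in\mathbb{F}_{q}^{*}}\eta(x)\sum_{z\in\mathbb{F}_{q}^{*}}\zeta_{q}^{-z^{2}(4x)^{-1}\mathrm{Tr}(\beta^{2})-z\lambda}.$$
Using Lemma \ref{p3le2} I would then replace $\eta(x)$ by $1$ when $m$ is even and by $\overline{\eta}(x)$ when $m$ is odd, and split according to whether $\mathrm{Tr}(\beta^{2})$ vanishes.

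If $\mathrm{Tr}(\beta^{2})=0$, the inner sum over $z$ reduces to $\sum_{z\in\mathbb{F}_{q}^{*}}\zeta_{q}^{-z\lambda}=-1$ because $\lambda\neq0$, leaving $N_{1}=-G\sum_{x\in\mathbb{F}_{q}^{*}}\eta(x)$. This equals $-G(q-1)$ for even $m$ and $0$ for odd $m$, since $\sum_{x\in\mathbb{F}_{q}^{*}}\overline{\eta}(x)=0$; this settles the first and third cases.

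If $\mathrm{Tr}(\beta^{2})\neq0$, the inner $z$-sum is a genuine quadratic sum over $\mathbb{F}_{q}$: I would extend it to all $z\in\mathbb{F}_{q}$ (subtracting the $z=0$ term, which contributes $1$) and apply Lemma \ref{p3le3} with $\mathbb{F}_{q}$ in place of $\mathbb{F}_{q^{m}}$ to get $\zeta_{q}^{\lambda^{2}x(\mathrm{Tr}(\beta^{2}))^{-1}}\overline{\eta}(-\mathrm{Tr}(\beta^{2}))\overline{\eta}(x)\overline{G}-1$, after using $\overline{\eta}(4)=1$ and $\overline{\eta}(x)^{-1}=\overline{\eta}(x)$ to simplify $\overline{\eta}(-(4x)^{-1}\mathrm{Tr}(\beta^{2}))$. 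Substituting back,
$$N_{1}=G\,\overline{\eta}(-\mathrm{Tr}(\beta^{2}))\,\overline{G}\sum_{x\in\mathbb{F}_{q}^{*}}\eta(x)\overline{\eta}(x)\zeta_{q}^{\lambda^{2}(\mathrm{Tr}(\beta^{2}))^{-1}x}-G\sum_{x\in\mathbb{F}_{q}^{*}}\eta(x).$$
For odd $m$ one has $\eta(x)\overline{\eta}(x)=\overline{\eta}(x)^{2}=1$ on $\mathbb{F}_{q}^{*}$, so the first sum is $-1$, the second is $0$, and $N_{1}=-\overline{\eta}(-\mathrm{Tr}(\beta^{2}))G\overline{G}$. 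For even $m$, $\eta(x)\overline{\eta}(x)=\overline{\eta}(x)$, and the substitution $x\mapsto\lambda^{-2}\mathrm{Tr}(\beta^{2})x$ shows $\sum_{x\in\mathbb{F}_{q}^{*}}\overline{\eta}(x)\zeta_{q}^{\lambda^{2}(\mathrm{Tr}(\beta^{2}))^{-1}x}=\overline{\eta}(\mathrm{Tr}(\beta^{2}))\overline{G}$; together with $\overline{G}^{2}=\overline{\eta}(-1)q$ (a consequence of Lemma \ref{p3le1}) the first term becomes $Gq$, and with the second term $-G(q-1)$ this yields $N_{1}=G$, settling the remaining two cases. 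The only delicate point is this even-$m$ subcase, where the answer $G$ emerges as the near-cancellation $Gq-G(q-1)$ and depends on correctly evaluating the twisted character sum over $x$ together with the Gauss-sum identity $\overline{G}^{2}=\overline{\eta}(-1)q$; everything else is routine character-sum bookkeeping.
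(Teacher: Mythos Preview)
Your proof is correct and follows the same overall strategy as the paper: interchange the order of summation, evaluate the $a$-sum via Lemma~\ref{p3le3}, replace $\eta$ on $\mathbb{F}_q^{*}$ using Lemma~\ref{p3le2}, and split into four cases. The only difference is in how you finish the two $\mathrm{Tr}(\beta^{2})\neq0$ cases. The paper (implicitly) sums over $x$ first: for fixed $z\neq0$ the map $x\mapsto -z^{2}\mathrm{Tr}(\beta^{2})/(4x)$ is a bijection of $\mathbb{F}_q^{*}$, so the $x$-sum collapses immediately to $\sum_{w\in\mathbb{F}_q^{*}}\overline{\chi}_{1}(w)=-1$ when $m$ is even, or to $\overline{\eta}(-\mathrm{Tr}(\beta^{2}))\sum_{w}\overline{\chi}_{1}(w)\overline{\eta}(w)=\overline{\eta}(-\mathrm{Tr}(\beta^{2}))\overline{G}$ when $m$ is odd; the remaining $z$-sum is then just $\sum_{z\in\mathbb{F}_q^{*}}\overline{\chi}_{1}(-z\lambda)=-1$. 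This route never needs a second application of Lemma~\ref{p3le3} or the identity $\overline{G}^{2}=\overline{\eta}(-1)q$, and the step you flag as ``delicate'' (the near-cancellation $Gq-G(q-1)=G$) simply does not arise. Your argument is sound, just slightly longer than necessary.
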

\begin{proof} Lemmas \ref{p3le2} and \ref{p3le3} give us
	\begin{align*}
		N_{1}
		&=\sum_{x,z\in\mathbb{F}_{q}^{*}}\overline{\chi}_{1}(-z\lambda)\sum_{a\in\mathbb{F}_{q^{m}}}\chi_{1}(xa^{2}+z\beta a)\\&=
		G\sum_{x,z\in\mathbb{F}_{q}^{*}}\overline{\chi}_{1}(-z\lambda)\chi_{1}(-\frac{z^{2}\beta^{2}}{4x})\eta(x)\\
		&=\begin{cases}
			G\sum\limits_{x,z\in\mathbb{F}_{q}^{*}}\overline{\chi}_{1}\big(-\frac{z^{2}\mathrm{Tr}(\beta^{2})}{4x}-z\lambda\big), & \text{if }2\mid m\\
			G\sum\limits_{x,z\in\mathbb{F}_{q}^{*}}\overline{\chi}_{1}\big(-\frac{z^{2}\mathrm{Tr}(\beta^{2})}{4x}-z\lambda\big)\overline{\eta}(x), & \text{if }2\nmid m
		\end{cases}\allowdisplaybreaks\\
		&=
		\begin{cases}
			G\sum\limits_{x,z\in\mathbb{F}_{q}^{*}}\overline{\chi}_{1}(-z\lambda), & \text{if }2\mid m~\text{and}~\mathrm{Tr}(\beta^{2})=0\\
			G\sum\limits_{x,z\in\mathbb{F}_{q}^{*}}\overline{\chi}_{1}\big(-\frac{z^{2}\mathrm{Tr}(\beta^{2})}{4x}-z\lambda\big), & \text{if }2\mid m~\text{and}~\mathrm{Tr}(\beta^{2})\neq0\\
			G\sum\limits_{x,z\in\mathbb{F}_{q}^{*}}\overline{\chi}_{1}(-z\lambda)\overline{\eta}(x), & \text{if }2\nmid m~\text{and}~\mathrm{Tr}(\beta^{2})=0\\
			G\sum\limits_{x,z\in\mathbb{F}_{q}^{*}}\overline{\chi}_{1}\big(-\frac{z^{2}\mathrm{Tr}(\beta^{2})}{4x}-z\lambda\big)\overline{\eta}(x), & \text{if }2\nmid m~\text{and}~\mathrm{Tr}(\beta^{2})\neq0
		\end{cases}\\
		&=\begin{cases}
			-G(q-1), & \text{if }2\mid m~\text{and}~\mathrm{Tr}(\beta^{2})=0,\\
			G, & \text{if }2\mid m~\text{and}~\mathrm{Tr}(\beta^{2})\neq0,\\
			0, & \text{if }2\nmid m~\text{and}~\mathrm{Tr}(\beta^{2})=0,\\
			-\overline{\eta}(-\mathrm{Tr}(\beta^{2}))G\overline{G}, & \text{if }2\nmid m~\text{and}~\mathrm{Tr}(\beta^{2})\neq0.
		\end{cases}
	\end{align*}	
	The proof is completed.\end{proof}
\begin{lemma}\label{p3le10}
	For $\beta\in\mathbb{F}_{q}^{*}$ and $\lambda\in\mathbb{F}_{q}^{*}$, let
	$$N_{2}=\sum_{a,b\in\mathbb{F}_{q^{m}}}\sum_{x\in\mathbb{F}_{q}^{*}}\zeta_{q}^{x\mathrm{Tr}(a^{2})}\sum_{y\in\mathbb{F}_{q}^{*}}\zeta_{q}^{y\mathrm{Tr}(b^{2})}\sum_{z\in\mathbb{F}_{q}^{*}}\zeta_{q}^{z\mathrm{Tr}(\beta a)-z\lambda}.$$ Then
	$$N_{2}=\begin{cases}
		-G^{2}(q-1)^{2},&\text{if }2\mid m~and~\mathrm{Tr}(\beta^{2})=0, \\
		G^{2}(q-1),&\text{if }2\mid m~and~\mathrm{Tr}(\beta^{2})\neq0,\\
		0,&\text{if }2\nmid m.
	\end{cases}$$
\end{lemma}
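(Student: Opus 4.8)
The plan is to exploit the fact that, in the defining sum for $N_2$, the inner summation over $b$ and $y$ is completely decoupled from the variables $a$, $x$, and $z$. Hence $N_2$ factors as a product of two sums:
$$N_2 = \left(\sum_{y\in\mathbb{F}_q^*}\sum_{b\in\mathbb{F}_{q^m}}\zeta_q^{y\mathrm{Tr}(b^2)}\right)\left(\sum_{a\in\mathbb{F}_{q^m}}\sum_{x\in\mathbb{F}_q^*}\zeta_q^{x\mathrm{Tr}(a^2)}\sum_{z\in\mathbb{F}_q^*}\zeta_q^{z\mathrm{Tr}(\beta a)-z\lambda}\right).$$

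First I would evaluate the left factor by specializing $s=0$ in Lemma \ref{p3le5}, which gives
$$\sum_{y\in\mathbb{F}_q^*}\sum_{b\in\mathbb{F}_{q^m}}\zeta_q^{y\mathrm{Tr}(b^2)}=\begin{cases}(q-1)G,&2\mid m,\\ 0,&2\nmid m.\end{cases}$$
Next I would observe that the right factor is precisely $N_1$ as defined in Lemma \ref{p3le9}, whose value is already recorded there in each of the relevant cases. Multiplying the two evaluations case by case then finishes the argument: when $2\nmid m$ the left factor vanishes, so $N_2=0$ regardless of the value of $\mathrm{Tr}(\beta^2)$; when $2\mid m$ the left factor equals $(q-1)G$, and combining with $N_1=-(q-1)G$ when $\mathrm{Tr}(\beta^2)=0$, respectively $N_1=G$ when $\mathrm{Tr}(\beta^2)\neq0$, yields $N_2=-G^2(q-1)^2$, respectively $N_2=G^2(q-1)$, which is exactly the claimed formula.

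There is no genuine obstacle here. The only step requiring a moment's care is spotting the factorization and correctly matching the two pieces to Lemmas \ref{p3le5} and \ref{p3le9}; once that identification is made, the remainder is routine bookkeeping over the parity of $m$ and the two values of $\mathrm{Tr}(\beta^2)$.
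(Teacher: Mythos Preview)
Your argument is correct. The paper's own proof is the one-line remark that the result follows by the same arguments used for Lemma~\ref{p3le9}; your explicit factorization $N_2=\bigl(\sum_{y\in\mathbb{F}_q^*}\sum_{b\in\mathbb{F}_{q^m}}\zeta_q^{y\mathrm{Tr}(b^2)}\bigr)\cdot N_1$, evaluated via Lemmas~\ref{p3le5} and~\ref{p3le9}, is exactly what those arguments amount to once the decoupled $(b,y)$-sum is split off, so the two approaches coincide.
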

\begin{proof} 
	One may easily prove the lemma by using the similar arguments that were used to prove the previous lemma.
\end{proof}

\begin{lemma}\label{p3le11}
	For $\alpha,\beta\in\mathbb{F}																																																																																																																																																																																																										    _{q^{m}}^{*}$ and $\lambda\in\mathbb{F}_{q}^{*}$, let
	$$N_{3}=\sum_{a,b\in\mathbb{F}_{q^{m}}}\sum_{x\in\mathbb{F}_{q}^{*}}\zeta_{q}^{x\mathrm{Tr}(a^{2})}\sum_{y\in\mathbb{F}_{q}^{*}}\zeta_{q}^{y\mathrm{Tr}(b^{2})}\sum_{z\in\mathbb{F}_{q}^{*}}\zeta_{q}^{z\mathrm{Tr}(\alpha b+\beta a)-z\lambda}.$$ Then	
	$$N_{3}=\begin{cases}
		-G^{2}(q-1)^{2},& \text{if }2\mid m,~ \mathrm{Tr}(\alpha^{2})=0\text{ and } \mathrm{Tr}(\beta^{2})=0,\\
		G^{2}(q-1),& \text{if }2\mid m,~ \mathrm{Tr}(\alpha^{2})=0\text{ and } \mathrm{Tr}(\beta^{2})\neq0,\\
		G^{2}(q-1),& \text{if }2\mid m,~ \mathrm{Tr}(\alpha^{2})\neq0\text{ and } \mathrm{Tr}(\beta^{2})=0,\\
		-G^{2},& \text{if }2\mid m,~ \mathrm{Tr}(\alpha^{2})\neq0\text{ and } \mathrm{Tr}(\beta^{2})\neq0,\\
		-\overline{\eta}\big(\mathrm{Tr}(\alpha^{2})\mathrm{Tr}(\beta^{2})\big)G^{2}\overline{G}^{2},&\text{if }2\nmid m,~\mathrm{Tr}(\alpha^{2})\neq0~\text{and}~\mathrm{Tr}(\beta^{2})\neq0,\\
		0,&\text{if }2\nmid m,~\mathrm{Tr}(\alpha^{2})=0~\text{or}~\mathrm{Tr}(\beta^{2})=0.
	\end{cases}	$$
\end{lemma}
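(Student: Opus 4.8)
The plan is to follow the template of the proofs of Lemmas~\ref{p3le9} and~\ref{p3le10}; the only genuinely new feature is that the variable $z$ now couples simultaneously to $a$ and to $b$ via $\mathrm{Tr}(\alpha b+\beta a)$. First I would interchange the order of summation so that the $a$-sum and the $b$-sum decouple:
\begin{equation*}
N_{3}=\sum_{x,y,z\in\mathbb{F}_{q}^{*}}\overline{\chi}_{1}(-z\lambda)\Big(\sum_{a\in\mathbb{F}_{q^{m}}}\chi_{1}(xa^{2}+z\beta a)\Big)\Big(\sum_{b\in\mathbb{F}_{q^{m}}}\chi_{1}(yb^{2}+z\alpha b)\Big).
\end{equation*}
Applying Lemma~\ref{p3le3} to each of the two inner quadratic sums, and using that $\chi_{1}\big(-z^{2}\beta^{2}(4x)^{-1}\big)=\overline{\chi}_{1}\big(-z^{2}\mathrm{Tr}(\beta^{2})(4x)^{-1}\big)$ (and likewise with $\alpha,y$ in place of $\beta,x$) because $x,z\in\mathbb{F}_{q}$, this becomes
\begin{equation*}
N_{3}=G^{2}\sum_{x,y,z\in\mathbb{F}_{q}^{*}}\overline{\chi}_{1}\Big(-\frac{z^{2}\mathrm{Tr}(\beta^{2})}{4x}-\frac{z^{2}\mathrm{Tr}(\alpha^{2})}{4y}-z\lambda\Big)\eta(x)\eta(y).
\end{equation*}

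Next I would split on the parity of $m$ via Lemma~\ref{p3le2}: for even $m$ the factors $\eta(x),\eta(y)$ both become $1$, while for odd $m$ they become $\overline{\eta}(x),\overline{\eta}(y)$. Inside each parity I would further distinguish whether $\mathrm{Tr}(\alpha^{2})$ and $\mathrm{Tr}(\beta^{2})$ vanish. If, say, $\mathrm{Tr}(\beta^{2})=0$, the corresponding $x$-summation trivialises: it contributes a factor $q-1$ when $m$ is even (reducing $N_{3}$ to an expression already evaluated inside the proofs of Lemmas~\ref{p3le9} and~\ref{p3le10}), and it equals $\sum_{x\in\mathbb{F}_{q}^{*}}\overline{\eta}(x)=0$ when $m$ is odd, forcing $N_{3}=0$. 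When both traces are nonzero and $m$ is even, for each fixed $z\in\mathbb{F}_{q}^{*}$ the maps $x\mapsto -z^{2}\mathrm{Tr}(\beta^{2})(4x)^{-1}$ and $y\mapsto -z^{2}\mathrm{Tr}(\alpha^{2})(4y)^{-1}$ are bijections of $\mathbb{F}_{q}^{*}$, so each inner sum equals $\sum_{w\in\mathbb{F}_{q}^{*}}\overline{\chi}_{1}(w)=-1$, leaving $G^{2}\sum_{z\in\mathbb{F}_{q}^{*}}\overline{\chi}_{1}(-z\lambda)=-G^{2}$. When both traces are nonzero and $m$ is odd, I would evaluate the inner sums by the substitution $w=-z^{2}\mathrm{Tr}(\beta^{2})(4x)^{-1}$, which (using $\overline{\eta}(z^{2})=\overline{\eta}(4)=1$ and $\overline{\eta}(w^{-1})=\overline{\eta}(w)$) gives $\sum_{x\in\mathbb{F}_{q}^{*}}\overline{\chi}_{1}\big(-z^{2}\mathrm{Tr}(\beta^{2})(4x)^{-1}\big)\overline{\eta}(x)=\overline{\eta}(-1)\,\overline{\eta}(\mathrm{Tr}(\beta^{2}))\,\overline{G}$, and analogously for the $y$-sum; multiplying the two, using $\overline{\eta}(-1)^{2}=1$, and then $\sum_{z\in\mathbb{F}_{q}^{*}}\overline{\chi}_{1}(-z\lambda)=-1$, yields $-\overline{\eta}\big(\mathrm{Tr}(\alpha^{2})\mathrm{Tr}(\beta^{2})\big)G^{2}\overline{G}^{2}$, as claimed.

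Throughout, the only elementary ingredient is the orthogonality identity $\sum_{z\in\mathbb{F}_{q}^{*}}\overline{\chi}_{1}(-z\lambda)=-1$, valid because $\lambda\neq 0$. I do not anticipate a genuine obstacle: the computation is structurally identical to those of Lemmas~\ref{p3le9} and~\ref{p3le10}, and the only mildly delicate point is bookkeeping the six cases and tracking the quadratic-character sign $\overline{\eta}(-1)$ in the odd-$m$ case with both traces nonzero — and that sign is harmless, since it enters squared and cancels.
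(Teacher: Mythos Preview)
Your proposal is correct and follows essentially the same route as the paper: decouple the $a$- and $b$-sums, apply Lemma~\ref{p3le3} to each to pull out the factor $G^{2}$, reduce $\eta$ to $1$ or $\overline{\eta}$ via Lemma~\ref{p3le2}, and then case-split on the vanishing of $\mathrm{Tr}(\alpha^{2})$ and $\mathrm{Tr}(\beta^{2})$. If anything, you spell out the bijection/substitution step in the nonvanishing-trace cases a bit more explicitly than the paper does, but the argument is structurally identical.
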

\begin{proof} From Lemmas \ref{p3le2}  and \ref{p3le3}, we have
	\begin{align*}
		N_{3}&=\sum_{x,z\in\mathbb{F}_{q}^{*}}\overline{\chi}_{1}(-z\lambda)\sum_{a\in\mathbb{F}_{q^{m}}}\chi_{1}(xa^{2}+z\beta a)\sum_{y\in\mathbb{F}_{q}^{*}}\sum_{b\in\mathbb{F}_{q^{m}}}\chi_{1}(yb^{2}+z\alpha b)\\
		&=\sum_{x,z\in\mathbb{F}_{q}^{*}}\overline{\chi}_{1}(-z\lambda)\chi_{1}(-\frac{z^{2}\beta^{2}}{4x})\eta(x)G\sum_{y\in\mathbb{F}_{q}^{*}}\chi_{1}(-\frac{z^{2}\alpha^{2}}{4y})\eta(y)G\\
		&=G^{2}\sum_{x,z\in\mathbb{F}_{q}^{*}}\overline{\chi}_{1}(-z\lambda)\chi_{1}(-\frac{z^{2}\beta^{2}}{4x})\eta(x)\sum_{y\in\mathbb{F}_{q}^{*}}\chi_{1}(-\frac{z^{2}\alpha^{2}}{4y})\eta(y)\\
		&=G^{2}\sum\limits_{x,z\in\mathbb{F}_{q}^{*}}\overline{\chi_{1}}(-\frac{z^{2}\mathrm{Tr}(\beta^{2})}{4x}-z\lambda)\eta(x)\sum\limits_{y\in\mathbb{F}_{q}^{*}}\overline{\chi_{1}}(-\frac{z^{2}\mathrm{Tr}(\alpha^{2})}{4y})\eta(y).
	\end{align*}
	Hereafter, we divide the proof into two parts depending on whether 2 divides $m$ or not.\\
	\textbf{Case 1:} Consider $2\mid m$, then we have
	\begin{align*}
		N_{3}&=\begin{cases}		G^{2}(q-1)\sum\limits_{x,z\in\mathbb{F}_{q}^{*}}\overline{\chi_{1}}(-z\lambda),&\text{if }\mathrm{Tr}(\alpha^{2})=0~\text{and}~\mathrm{Tr}(\beta^{2})=0\\
			G^{2}(q-1)\sum\limits_{x,z\in\mathbb{F}_{q}^{*}}\overline{\chi_{1}}(-\frac{z^{2}\mathrm{Tr}(\beta^{2})}{4x}-z\lambda),&\text{if }\mathrm{Tr}(\alpha^{2})=0~\text{and}~\mathrm{Tr}(\beta^{2})\neq0\\	-G^{2}\sum\limits_{x,z\in\mathbb{F}_{q}^{*}}\overline{\chi_{1}}(-z\lambda),&\text{if }\mathrm{Tr}(\alpha^{2})\neq0 \text{ and } \mathrm{Tr}(\beta^{2})=0\\
			-G^{2}\sum\limits_{x,z\in\mathbb{F}_{q}^{*}}\overline{\chi_{1}}(-\frac{z^{2}\mathrm{Tr}(\beta^{2})}{4x}-z\lambda),&\text{if }\mathrm{Tr}(\alpha^{2})\neq0 \text{ and } \mathrm{Tr}(\beta^{2})\neq0
		\end{cases}\\
		&=\begin{cases}		-G^{2}(q-1)^{2},&\text{if }\mathrm{Tr}(\alpha^{2})=0~\text{and}~\mathrm{Tr}(\beta^{2})=0,\\
			G^{2}(q-1),&\text{if }\mathrm{Tr}(\alpha^{2})=0~\text{and}~\mathrm{Tr}(\beta^{2})\neq0,\\	
			G^{2}(q-1),&\text{if }\mathrm{Tr}(\alpha^{2})\neq0 \text{ and } \mathrm{Tr}(\beta^{2})=0,\\
			-G^{2},&\text{if }\mathrm{Tr}(\alpha^{2})\neq0 \text{ and } \mathrm{Tr}(\beta^{2})\neq0.
		\end{cases}
	\end{align*}
	\textbf{Case 2:} Suppose $2\nmid m$, then we have
	\begin{align*}
		&=\begin{cases}
			G^{2}\sum\limits_{x,z\in\mathbb{F}_{q}^{*}}\overline{\chi_{1}}(-z\lambda)\overline{\eta}(x)\sum\limits_{y\in\mathbb{F}_{q}^{*}}\overline{\eta}(y),&\text{if }\mathrm{Tr}(\alpha^{2})=0 \text{ and } \mathrm{Tr}(\beta^{2})=0\\
			G^{2}\sum\limits_{x,z\in\mathbb{F}_{q}^{*}}\overline{\chi_{1}}(-\frac{z^{2}\mathrm{Tr}(\beta^{2})}{4x}-z\lambda)\overline{\eta}(x)\sum\limits_{y\in\mathbb{F}_{q}^{*}}\overline{\eta}(y),&\text{if }\mathrm{Tr}(\alpha^{2})=0 \text{ and } \mathrm{Tr}(\beta^{2})\neq0\\
			G^{2}\sum\limits_{x,z\in\mathbb{F}_{q}^{*}}\overline{\chi_{1}}(-z\lambda)\overline{\eta}(x)\sum\limits_{y\in\mathbb{F}_{q}^{*}}\overline{\chi_{1}}(-\frac{z^{2}\mathrm{Tr}(\alpha^{2})}{4y})\overline{\eta}(y),&\text{if }\mathrm{Tr}(\alpha^{2})\neq0 \text{ and } \mathrm{Tr}(\beta^{2})=0\\
			G^{2}\sum\limits_{x,z\in\mathbb{F}_{q}^{*}}\overline{\chi_{1}}(-\frac{z^{2}\mathrm{Tr}(\beta^{2})}{4x}-z\lambda)\overline{\eta}(x)\sum\limits_{y\in\mathbb{F}_{q}^{*}}\overline{\chi_{1}}(-\frac{z^{2}\mathrm{Tr}(\alpha^{2})}{4y})\overline{\eta}(y),&\text{if }\mathrm{Tr}(\alpha^{2})\neq0 \text{ and } \mathrm{Tr}(\beta^{2})\neq0
		\end{cases}\\
		&=\begin{cases}
			0,&\text{if }\mathrm{Tr}(\alpha^{2})=0 \text{ or } \mathrm{Tr}(\beta^{2})=0,\\
			-\overline{\eta}\big(\mathrm{Tr}(\alpha^{2})\mathrm{Tr}(\beta^{2})\big)G^{2}\overline{G}^{2},&\text{if }\mathrm{Tr}(\alpha^{2})\neq0 \text{ and } \mathrm{Tr}(\beta^{2})\neq0.
		\end{cases}
	\end{align*}
	By combining the above two cases, we get the required proof.\end{proof}

\begin{lemma}\label{p3le13}
	For $\alpha,\beta\in\mathbb{F}_{q^{m}}$ and $\lambda\in\mathbb{F}_{q}^{*}$, let
	$$\Omega_{\lambda}(\alpha,\beta)=\#\{(a, b)\in\mathbb{F}_{q^{m}}\times\mathbb{F}_{q^{m}} : \mathrm{Tr}(a^{2})=0,~\mathrm{Tr}(b^{2})=0~and~\mathrm{Tr}(\alpha b+\beta a)=\lambda\}.$$
	Then\\
	$1$. if $\alpha=0$ and $\beta=0$, we have $\Omega_{\lambda}(0,0)=0;$\\
	$2$. if $\alpha=0$ and $\beta\neq0$, we have
	$$\Omega_{\lambda}(0,\beta)=\begin{cases}
		q^{2m-3}+G(q-1)q^{m-3}, & \text{ if } 2\mid m ~and~\mathrm{Tr}(\beta^{2})=0,\\
		q^{2m-3}+G(2q-1)q^{m-3}+G^{2}(q-1)q^{-2}, & \text{ if }2\mid m ~and~\mathrm{Tr}(\beta^{2})\neq0,\\
		q^{2m-3}, & \text{ if }2\nmid m ~and~\mathrm{Tr}(\beta^{2})=0,\\
		q^{2m-3}+\overline{\eta}(-\mathrm{Tr}(\beta^{2}))G\overline{G}q^{m-3}, & \text{ if }2\nmid m ~and~\mathrm{Tr}(\beta^{2})\neq0;
	\end{cases}~~~~~~~$$
	$3$. if $\alpha\neq0$ and $\beta=0$, we have
	$\Omega_{\lambda}(\alpha,\beta)=\Omega_{\lambda}(\alpha,0)=\Omega_{\lambda}(0,\alpha);$\\
	$4$. if $\alpha\neq0$ and $\beta\neq0$, we have
	{\small$$\Omega_{\lambda}(\alpha,\beta)=\begin{cases}
			q^{2m-3}+2G(q-1)q^{m-3},& \text{ if } 2\mid m,~ \mathrm{Tr}(\alpha^{2})=0\text{ and }\mathrm{Tr}(\beta^{2})=0,\\
			q^{2m-3}+2G(q-1)q^{m-3}+G^{2}(q-1)q^{-2},& \text{ if } 2\mid m,~ \mathrm{Tr}(\alpha^{2})=0\text{ and } \mathrm{Tr}(\beta^{2})\neq0,\\
			q^{2m-3}+2G(q-1)q^{m-3}+G^{2}(q-1)q^{-2},&\text{ if }2\mid m,~ \mathrm{Tr}(\alpha^{2})\neq0\text{ and } \mathrm{Tr}(\beta^{2})=0,\\
			q^{2m-3}+2G(q-1)q^{m-3}+G^{2}(q-2)q^{-2},&\text{ if } 2\mid m,~ \mathrm{Tr}(\alpha^{2})\neq0\text{ and }\mathrm{Tr}(\beta^{2})\neq0,\\
			q^{2m-3}-\overline{\eta}\big(\mathrm{Tr}(\alpha^{2})\mathrm{Tr}(\beta^{2})\big)G^{2}\overline{G}^{2}q^{-3},& \text{ if }2\nmid m,~\mathrm{Tr}(\alpha^{2})\neq0\text{ and }\mathrm{Tr}(\beta^{2})\neq0,\\
			q^{2m-3},& \text{ if }2\nmid m,~\mathrm{Tr}(\alpha^{2})=0\text{ or } \mathrm{Tr}(\beta^{2})=0.
		\end{cases}$$}
\end{lemma}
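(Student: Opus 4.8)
The strategy is to convert the three defining constraints into additive‑character sums over $\mathbb{F}_q$ and then match the resulting expressions with the quantities already evaluated in Lemmas \ref{p3le4}, \ref{p3le5}, \ref{p3le9}, \ref{p3le10} and \ref{p3le11}. Using the orthogonality relation $\tfrac1q\sum_{x\in\mathbb{F}_q}\zeta_q^{xc}=1$ if $c=0$ and $0$ otherwise, one gets
$$\Omega_\lambda(\alpha,\beta)=\frac1{q^3}\sum_{a,b\in\mathbb{F}_{q^m}}\Bigl(\sum_{x\in\mathbb{F}_q}\zeta_q^{x\mathrm{Tr}(a^2)}\Bigr)\Bigl(\sum_{y\in\mathbb{F}_q}\zeta_q^{y\mathrm{Tr}(b^2)}\Bigr)\Bigl(\sum_{z\in\mathbb{F}_q}\zeta_q^{z(\mathrm{Tr}(\alpha b+\beta a)-\lambda)}\Bigr).$$
Part 1 drops out immediately, since for $\alpha=\beta=0$ the last constraint reads $0=\lambda\neq0$; equivalently, in the displayed identity the $z=0$ and $z\neq0$ contributions are equal and opposite. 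Part 3 needs no computation either: the map $(a,b)\mapsto(b,a)$ is a bijection of the solution set interchanging the roles of $\alpha$ and $\beta$, so $\Omega_\lambda(\alpha,\beta)=\Omega_\lambda(\beta,\alpha)$, and in particular $\Omega_\lambda(\alpha,0)=\Omega_\lambda(0,\alpha)$, whose value is then read off from Part 2.

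For Parts 2 and 4 I would split the inner triple sum according to whether $z=0$. When $z=0$ the three letters separate, and each of the two resulting factors equals $\sum_{x\in\mathbb{F}_q}\sum_{a\in\mathbb{F}_{q^m}}\zeta_q^{x\mathrm{Tr}(a^2)}=q\,N_0$, where $N_0=\#\{a\in\mathbb{F}_{q^m}:\mathrm{Tr}(a^2)=0\}$ is supplied by Lemma \ref{p3le4}; hence the $z=0$ block contributes $(qN_0)^2$. For $z\neq0$ the crucial structural fact is that $\sum_{a\in\mathbb{F}_{q^m}}\zeta_q^{z\mathrm{Tr}(\gamma a)}=0$ whenever $\gamma\neq0$: it forces $x\neq0$ (because $\beta\neq0$) and, in Part 4, also $y\neq0$ (because $\alpha\neq0$). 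Consequently, in Part 4 the $z\neq0$ block is precisely the sum $N_3$ of Lemma \ref{p3le11}, so $q^3\,\Omega_\lambda(\alpha,\beta)=(qN_0)^2+N_3$; while in Part 2 ($\alpha=0$, $\beta\neq0$) the $b$‑factor survives unconstrained as another $qN_0$ and the rest is exactly the sum $N_1$ of Lemma \ref{p3le9} with parameters $\beta,\lambda$, so $q^3\,\Omega_\lambda(0,\beta)=qN_0\,(qN_0+N_1)$. (Lemma \ref{p3le10} records the byproduct sum $N_2$ which appears if one instead carries out the full eight‑fold split over $(x,y,z)$.)

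It then remains to insert the closed forms of $N_0$, $N_1$, $N_3$ and simplify, branching on the parity of $m$ and on whether $\mathrm{Tr}(\alpha^2)$ and $\mathrm{Tr}(\beta^2)$ vanish. For odd $m$ this is painless: $qN_0=q^m$, so the whole answer is $q^{2m-3}$ plus the single term coming from $N_3$ in Part 4 (or from $qN_0\,N_1$ in Part 2). The only real work, and the place one can most easily slip, is the even‑$m$ bookkeeping: there $qN_0=q^m+(q-1)G$, so $(qN_0)^2$ already carries a $G$‑linear term of size $q^m(q-1)G$ together with a $(q-1)^2G^2$ term, and one must check that these quadratic‑in‑$G$ contributions combine with the $G^2$‑terms hidden inside $N_1$ or $N_3$ to leave exactly the stated coefficients $q^{-2}(q-1)$, $q^{-2}(q-2)$, $-q^{-3}\,\overline{\eta}\!\bigl(\mathrm{Tr}(\alpha^2)\mathrm{Tr}(\beta^2)\bigr)\overline{G}^2$, and so on. This is cleanest if one keeps $q^3\Omega$ in the factored form $qN_0(qN_0+N_1)$ in Part 2, and groups Lemma \ref{p3le11}'s four $2\mid m$ sub‑cases in Part 4.
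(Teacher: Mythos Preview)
Your proposal is correct and follows the same character--sum strategy as the paper: express $\Omega_\lambda$ via orthogonality, peel off the $z=0$ contribution as $q^{-1}N(0,0)$, and reduce the remainder to the auxiliary sums of Lemmas~\ref{p3le9}--\ref{p3le11}. The paper carries out the full four--way split $\Omega_1+\Omega_2+\Omega_3+\Omega_4$ over $(x,y)\in(\{0\}\cup\mathbb{F}_q^{*})^2$ and then cites Lemmas~\ref{p3le5}, \ref{p3le7}, \ref{p3le9}, \ref{p3le10}, \ref{p3le11} in bulk; you instead observe at the outset that the vanishing of $\sum_{a}\zeta_q^{z\mathrm{Tr}(\gamma a)}$ for $\gamma\neq0$ kills the $x=0$ (and, in Part~4, the $y=0$) pieces, which lets you skip Lemma~\ref{p3le10} entirely and write the compact factorizations $q^3\Omega_\lambda(0,\beta)=qN_0(qN_0+N_1)$ and $q^3\Omega_\lambda(\alpha,\beta)=(qN_0)^2+N_3$. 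This factored form genuinely streamlines the even--$m$ bookkeeping you flag (the $(q-1)^2G^2$ cross terms cancel or combine on sight), but conceptually the two arguments are identical.
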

\begin{proof} The characteristics of additive characters give us
	\begin{align*}
		\Omega_{\lambda}(\alpha,\beta)&=q^{-3}\sum_{a,b\in\mathbb{F}_{q^{m}}}\sum_{x\in\mathbb{F}_{q}}\zeta_{q}^{x\mathrm{Tr}^{m}_{1}(a^{2})}\sum_{y\in\mathbb{F}_{q}}\zeta_{q}^{y\mathrm{Tr}^{m}_{1}(b^{2})}\sum_{z\in\mathbb{F}_{q}}\zeta_{q}^{z \mathrm{Tr}^{m}_{1}(\alpha b+\beta a)-z\lambda}\\
		&=q^{-3}\sum_{a,b\in\mathbb{F}_{q^{m}}}\sum_{x\in\mathbb{F}_{q}}\zeta_{q}^{x\mathrm{Tr}^{m}_{1}(a^{2})}\sum_{y\in\mathbb{F}_{q}}\zeta_{q}^{y\mathrm{Tr}^{m}_{1}(b^{2})}\left(1+\sum_{z\in\mathbb{F}_{q}^{*}}\zeta_{q}^{z \mathrm{Tr}^{m}_{1}(\alpha b+\beta a)-z\lambda}\right)\\
		&=q^{-1}N(0,0)+q^{-3}\sum_{a,b\in\mathbb{F}_{q^{m}}}\sum_{x\in\mathbb{F}_{q}}\zeta_{q}^{x\mathrm{Tr}^{m}_{1}(a^{2})}\sum_{y\in\mathbb{F}_{q}}\zeta_{q}^{y\mathrm{Tr}^{m}_{1}(b^{2})}\sum_{z\in\mathbb{F}_{q}^{*}}\zeta_{q}^{z \mathrm{Tr}^{m}_{1}(\alpha b+\beta a)-z\lambda}\\
		&=q^{-1}N(0,0)+q^{-3}(\Omega_{1}+\Omega_{2}+\Omega_{3}+\Omega_{4}),
	\end{align*}
	where 
	\begin{align*}
		\Omega_{1}&=\sum_{a,b\in\mathbb{F}_{q^{m}}}\sum_{z\in\mathbb{F}_{q}^{*}}\zeta_{q}^{z \mathrm{Tr}^{m}_{1}(\alpha b+\beta a)-z\lambda},\\
		\Omega_{2}&=\sum_{a,b\in\mathbb{F}_{q^{m}}}\sum_{x\in\mathbb{F}_{q}^{*}}\zeta_{q}^{x\mathrm{Tr}^{m}_{1}(a^{2})}\sum_{z\in\mathbb{F}_{q}^{*}}\zeta_{q}^{z \mathrm{Tr}^{m}_{1}(\alpha b+\beta a)-z\lambda},\\
		\Omega_{3}&=\sum_{a,b\in\mathbb{F}_{q^{m}}}\sum_{y\in\mathbb{F}_{q}^{*}}\zeta_{q}^{y\mathrm{Tr}^{m}_{1}(b^{2})}\sum_{z\in\mathbb{F}_{q}^{*}}\zeta_{q}^{z \mathrm{Tr}^{m}_{1}(\alpha b+\beta a)-z\lambda},\\
		\Omega_{4}&=N_{4}=\sum_{a,b\in\mathbb{F}_{q^{m}}}\sum_{x\in\mathbb{F}_{q}^{*}}\zeta_{q}^{x\mathrm{Tr}^{m}_{1}(a^{2})}\sum_{y\in\mathbb{F}_{q}^{*}}\zeta_{q}^{y\mathrm{Tr}^{m}_{1}(b^{2})}\sum_{z\in\mathbb{F}_{q}^{*}}\zeta_{q}^{z \mathrm{Tr}^{m}_{1}(\alpha b+\beta a)-z\lambda}.
	\end{align*}		
	Thus, the result directly follows from the Lemmas \ref{p3le5}, \ref{p3le7}, \ref{p3le9}, \ref{p3le10} and \ref{p3le11}.\end{proof}

After completing the aforementioned foundational work, we are now prepared to prove our main results in the following theorems.
\section{Code Construction and Their Weight Distribution}\label{Sec.4}
In this section, we introduce our codes and subsequently provide their Lee-weight distribution and complete Hamming-weight enumerators of their Gray images, respectively.

We define the defining set $D=\{a+ub\in\mathcal{R}_{m}^{*}: \mathrm{tr}(a^{2}+ub^{2})=0\}$ and let $D=\{a+ub\in\mathcal{R}_{m}^{*}: \mathrm{tr}(a^{2}+ub^{2})=0\}$=$\{a_{1}+ub_{1},a_{2}+ub_{2},\ldots,a_{n}+ub_{n}\}\subset\mathcal{R}_{m}^{*}$. Then, for $x=\alpha+u\beta\in\mathcal{R}_{m}$, we define, based on construction in \eqref{eq1}, our code $\mathcal{C}_{D}$ as follows:
\begin{align}
	\mathcal{C}_{D}&=\{(\mathrm{tr}(xa_{1}+uxb_{1}),\ldots,\mathrm{tr}(xa_{n}+uxb_{n})):x\in\mathcal{R}_{m}\}\label{eq4}\\
	&=\{(\mathrm{Tr}(\alpha a_{1}+\beta b_{1})+u\mathrm{Tr}(\beta a_{1}+\alpha b_{1}),\ldots,\mathrm{Tr}(\alpha a_{n}+\beta b_{n})+u\mathrm{Tr}(\beta a_{n}+\alpha b_{n})):\alpha, \beta\in\mathbb{F}_{q^{m}}\}\nonumber,
\end{align}
and
$$\phi(\mathcal{C}_{D})=\{(\mathrm{Tr}(\alpha a_{1}+\beta b_{1}),\mathrm{Tr}(\beta a_{1}+\alpha b_{1}),\ldots,\mathrm{Tr}(\alpha a_{n}+\beta b_{n}),\mathrm{Tr}(\beta a_{n}+\alpha b_{n})):\alpha, \beta\in\mathbb{F}_{q^{m}}\}.$$

The following theorems present the Lee-weight distribution of $\mathcal{C}_{D}$ and the complete Hamming-weight enumerator of $\phi(\mathcal{C}_{D})$, depending on whether $2$ divides $m$ or not.
\begin{theorem}\label{p3theorem1}
	Assume that $m\geq3$ is odd. The parameters for the presented  linear code $\mathcal{C}_{D}$ are $[q^{2m-2}-1,m].$ Additionally,  Table 1 contains its Lee-weight distribution, and the complete Hamming-weight enumerator of $\phi(\mathcal{C}_{D})$ is
	\begin{align*}
		&w_{0}^{2q^{2m-2}-2}+P_{0}w_{0}^{Q_{0}-2}\prod_{i=1}^{p-1}w_{i}^{Q_{0}}+P_{-1}w_{0}^{Q_{0}+2(q-1)q^{\frac{3m-5}{2}}-2}\prod_{i=1}^{q-1}w_{i}^{Q_{0}-2q^{\frac{3m-5}{2}}}\\&+P_{1}w_{0}^{Q_{0}-2(q-1)q^{\frac{3m-5}{2}}-2}\prod_{i=1}^{q-1}w_{i}^{Q_{0}+2q^{\frac{3m-5}{2}}}+Q_{1}w_{0}^{Q_{0}+2(q-1)q^{m-2}-2}\prod_{i=1}^{q-1}w_{i}^{Q_{0}-2q^{m-2}}\\&+Q_{-1}w_{0}^{Q_{0}-2(q-1)q^{m-2}-2}\prod_{i=1}^{q-1}w_{i}^{Q_{0}+2q^{m-2}},
	\end{align*}
	where $P_{i}=(q-1)(q^{m-1}+iq^{\frac{m-1}{2}})(i=1,-1),\mbox{$Q_{j}=\frac{(q-1)2}{2}(q^{2m-2}+jq^{m-1})$} ~(j=1,-1),~Q_{0}=2q^{2m-3}, P_{0}=(2q-1)q^{2m-2}-2(q-1)q^{m-1}-1.$
\end{theorem}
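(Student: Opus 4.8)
The plan is to compute the length, dimension, and the full Lee-weight distribution of $\mathcal{C}_D$ by character-sum techniques, and then transfer the information to the complete Hamming-weight enumerator of the Gray image $\phi(\mathcal{C}_D)$.

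First I would determine the length $n=\#D$. Writing $D=\{a+ub\in\mathcal{R}_m^*:\mathrm{tr}(a^2+ub^2)=0\}$ and using $\mathrm{tr}(a^2+ub^2)=\mathrm{Tr}(a^2)+u\mathrm{Tr}(b^2)$, the condition $\mathrm{tr}(a^2+ub^2)=0$ is equivalent to $\mathrm{Tr}(a^2)=0$ and $\mathrm{Tr}(b^2)=0$. Hence $n=N(0,0)-1$, where $N(0,0)$ is the quantity from Lemma \ref{p3le7}; in the odd-$m$ case $N(0,0)=q^{2m-2}$, so $n=q^{2m-2}-1$ (the $-1$ removing the zero element, which is excluded from $\mathcal{R}_m^*$). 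For the dimension, I would argue that the map $x=\alpha+u\beta\mapsto(\mathrm{tr}(xd))_{d\in D}$ from $\mathcal{R}_m$ to $\mathcal{C}_D$ has image of $\mathbb{F}_q$-dimension $m$: the kernel consists of those $x$ with $\mathrm{Tr}(\alpha a_i+\beta b_i)=\mathrm{Tr}(\beta a_i+\alpha b_i)=0$ for all $i$, and one shows (using that $D$ spans enough of $\mathbb{F}_{q^m}\times\mathbb{F}_{q^m}$, or directly that the only such $x$ satisfying the trace conditions against all admissible $d$ is forced to lie in a codimension-$m$ subspace) that $\#\mathcal{C}_D=q^m$. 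This is the standard dimension argument for defining-set codes; I would phrase it as $\#\mathcal{C}_D=q^{2m}/\#\{x:\mathrm{tr}(xd)=0\ \forall d\in D\}$ and verify the denominator is $q^m$.

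The core computation is the weight distribution. For a codeword $c_x$ indexed by $x=\alpha+u\beta$, the Lee-weight equals the Hamming-weight of $\phi(c_x)$, which is $2n$ minus the number of coordinates of $\phi(c_x)$ equal to zero, i.e. $2n$ minus $\#\{d=a+ub\in D:\mathrm{Tr}(\alpha a+\beta b)=0\}-\#\{d=a+ub\in D:\mathrm{Tr}(\beta a+\alpha b)=0\}$. Each of these two counts is, after adding back the excluded zero element, exactly the quantity $\Omega_0$-type count — more precisely, setting $\gamma=\alpha+\beta$, $\delta=\alpha-\beta$ one has $\mathrm{Tr}(\alpha a+\beta b)$ and $\mathrm{Tr}(\beta a+\alpha b)$ expressed through $\mathrm{Tr}((\alpha\pm\beta)(a\pm b))$; I would instead directly invoke Lemma \ref{p3le13} with $\lambda$ ranging over $\mathbb{F}_q^*$ to count codewords coordinate by coordinate. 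Concretely, the number of nonzero entries in the first family of coordinates is $\sum_{\lambda\in\mathbb{F}_q^*}\Omega_\lambda(\alpha,\beta)$ (with $\alpha,\beta$ relabeled appropriately), and similarly for the second family. Summing the two and substituting the explicit values of $\Omega_\lambda$ from Lemma \ref{p3le13} in the odd-$m$ case produces a Lee-weight that depends only on whether $\alpha=0$, $\beta=0$, $\mathrm{Tr}(\alpha^2)=0$, $\mathrm{Tr}(\beta^2)=0$. Counting how many pairs $(\alpha,\beta)\in\mathbb{F}_{q^m}\times\mathbb{F}_{q^m}$ fall into each of these cases — using $N_0=q^{m-1}$ (number of $\alpha$ with $\mathrm{Tr}(\alpha^2)=0$, Lemma \ref{p3le4}) and hence $q^{m}-q^{m-1}$ with nonzero trace, and excluding $\alpha=0$ or $\beta=0$ carefully — gives the frequencies. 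These frequencies should match the $P_0,P_{\pm1},Q_{\pm1}$ in the statement (together with the single zero codeword contributing $w_0^{2q^{2m-2}-2}$), and I would verify the total $1+P_0+P_1+P_{-1}+Q_1+Q_{-1}=q^m$ as a consistency check.

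Finally, for the complete Hamming-weight enumerator of $\phi(\mathcal{C}_D)$ I would note that $\phi(\mathcal{C}_D)$ is a linear code over $\mathbb{F}_q$ of length $2n$, and for each nonzero codeword the composition (number of occurrences of each field element) is governed by the same character sums: the count of coordinates equal to a fixed $w_i\neq0$ is $\Omega$-type with $\lambda=w_i$, and by Lemma \ref{p3le13} this count is the \emph{same} for every nonzero $w_i$ (the dependence on $\lambda$ disappears in the odd-$m$ case). Thus each nonzero symbol appears equally often, say $Q_0\pm(\text{correction})$ times, and the zero symbol fills the rest — this yields exactly the monomials $w_0^{\cdots}\prod_{i=1}^{q-1}w_i^{\cdots}$ in the claimed formula, with exponents read off from the Lee-weights computed above (each coordinate of $\mathcal{C}_D$ splits into two coordinates of $\phi(\mathcal{C}_D)$, so a Lee-weight $w$ corresponds to $w$ nonzero symbols distributed evenly among the $q-1$ nonzero values). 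I expect the main obstacle to be the bookkeeping in the last two steps: correctly partitioning $\mathbb{F}_{q^m}^2$ into the relevant cases while avoiding double-counting the degenerate rows ($\alpha=0$, $\beta=0$), and matching the resulting exponents and multiplicities precisely to the closed forms $P_0,P_{\pm1},Q_{\pm1},Q_0$ — in particular getting the powers of $q$ (the $q^{\frac{3m-5}{2}}$ versus $q^{m-2}$ scales, coming respectively from the $G\overline G$ and $G^2\overline G^2$ terms in Lemma \ref{p3le13}) and the shifts by $2$ in the $w_0$-exponents right.
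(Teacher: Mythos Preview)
Your overall route is the paper's route: length $n=N(0,0)-1=q^{2m-2}-1$ from Lemma~\ref{p3le7}; for a codeword indexed by $x=\alpha+u\beta$ the Hamming weight of $\phi(c_x)$ is $\sum_{\lambda\in\mathbb{F}_q^*}\bigl(\Omega_\lambda(\alpha,\beta)+\Omega_\lambda(\beta,\alpha)\bigr)$, which by the symmetry $\Omega_\lambda(\alpha,\beta)=\Omega_\lambda(\beta,\alpha)$ and the $\lambda$-independence of $\Omega_\lambda$ in the odd-$m$ case (Lemma~\ref{p3le13}) collapses to $2(q-1)\Omega_\lambda(\alpha,\beta)$; then case-split on the vanishing and quadratic character of $\mathrm{Tr}(\alpha^2),\mathrm{Tr}(\beta^2)$ and count frequencies via Lemmas~\ref{p3le4} and~\ref{p3le7}; finally $\mathcal{A}_\lambda(\phi(c))=2\Omega_\lambda(\alpha,\beta)$ gives the complete weight enumerator. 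That is exactly how the paper proceeds.

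There is, however, a genuine slip in your dimension paragraph. You assert that the evaluation map $x\mapsto(\mathrm{tr}(xd))_{d\in D}$ has $\mathbb{F}_q$-image of size $q^m$, equivalently that the kernel has size $q^m$, and you propose the sanity check $1+P_0+P_1+P_{-1}+Q_1+Q_{-1}=q^m$. In fact the map from $\mathcal{R}_m$ is injective: if $\mathrm{tr}(xd)=0$ for all $d\in D$ then in particular $\mathrm{Tr}(\alpha a+\beta b)=\mathrm{Tr}(\beta a+\alpha b)=0$ for every pair $(a,b)$ with $\mathrm{Tr}(a^2)=\mathrm{Tr}(b^2)=0$, and taking $b=0$ with $a$ ranging over the $(q^{m-1}-1)$ nonzero elements of trace-square zero already forces $\alpha=\beta=0$ by nondegeneracy of the trace form. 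Hence $\#\mathcal{C}_D=|\mathcal{R}_m|=q^{2m}$, the $\mathcal{R}$-rank is $m$ (which is what ``dimension $m$'' means in the theorem), and the multiplicities in Table~1 sum to $q^{2m}$, not $q^m$. Your proposed check would fail and mislead you into thinking the weight computation was wrong; fix the target to $q^{2m}$ and the rest of your plan goes through unchanged.
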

\begin{center}
	\begin{tabular}[l]{p{8cm} p{6.6cm}}
		\text{\textbf{Table 1:} The presented code $\mathcal{C}_{D}$, for $m\geq3$ odd, has the following Lee-weight distribution }\\
		\hline
		Weight $w$ & Multiplicity $f$ \\  
		\hline  
		0 & 1  \\ 
		
		$2(q-1)q^{2m-3}$ & $(2q-1)q^{2m-2}-2(q-1)q^{m-1}-1 $ \\
		
		$2(q-1)(q^{2m-3}-q^{\frac{3m-5}{2}})$ 	& $(q-1)(q^{m-1}-q^{\frac{m-1}{2}})$\\
		$2(q-1)(q^{2m-3}+q^{\frac{3m-5}{2}})$ 	& $(q-1)(q^{m-1}+q^{\frac{m-1}{2}})$\\
		$2(q-1)(q^{2m-3}-q^{m-2})$	&$\frac{(q-1)^{2}}{2}(q^{2m-2}+q^{m-1}) $\\
		$2(q-1)(q^{2m-3}+q^{m-2})$	&$\frac{(q-1)^{2}}{2}(q^{2m-2}-q^{m-1}) $\\
		
		\hline
	\end{tabular}
\end{center}
\begin{proof}
	For odd  $m(\geq3)$, let $G=(-1)^{\frac{(q-1)^{2}m}{8}}q^{\frac{m}{2}}$ and $\overline{G}=(-1)^{\frac{(q-1)^{2}}{8}}q^{\frac{1}{2}}$.  By Lemma \ref{p3le7}, we obtain the length of the code $\mathcal{C}_{D}$=$N(0,0)-1=q^{2m-2}-1$. For $\textbf{c}\in\mathcal{C}_{D},$ we have  $wt_{H}(\phi(\textbf{c}))=(q-1)(\Omega_{\lambda}(\alpha,\beta)+\Omega_{\lambda}(\beta,\alpha))=2(q-1)\Omega_{\lambda}(\alpha,\beta)$ since $ \Omega(\alpha,\beta)=\Omega(\beta,\alpha).$\\
	
	Suppose that $\alpha=0$ and $\beta=0$. Then, by Lemmas \ref{p3le7} and \ref{p3le13}, we have $wt_{H}(\phi(\textbf{c}))=0.$
	This value occurs only once.\\
	Next, we suppose that $\alpha\neq0$ and $\beta\neq0$. Then, we have five distinct cases to consider:\\
	\textbf{Case 1:} If $\mathrm{Tr}(\alpha^{2})=0$ and $\mathrm{Tr}(\beta^{2})=0$ or $\mathrm{Tr}(\alpha^{2})=0$ and $\mathrm{Tr}(\beta^{2})\neq0$ or $\mathrm{Tr}(\alpha^{2})\neq0$ and $\mathrm{Tr}(\beta^{2})=0$ or $\mathrm{Tr}(\alpha^{2})=0$ or $\mathrm{Tr}(\beta^{2})=0$, then, from Lemmas \ref{p3le7} and \ref{p3le13}, we obtain
	$$
	wt_{H}(\phi(\textbf{c}))=2(q-1)q^{2m-3}.$$
	By Lemmas \ref{p3le4} and \ref{p3le7}, the frequency is  $(2q-1)q^{2m-2}-2(q-1)q^{m-1}-1$.\\
	\textbf{Case 2:} If $\overline{\eta}(\mathrm{Tr}(\alpha^{2}))=1$ or $\overline{\eta}(\mathrm{Tr}(\beta^{2}))=1$, then, by Lemmas \ref{p3le7} and \ref{p3le13}, one can obtain
	$$
	wt_{H}(\phi(\textbf{c}))=2(q-1)q^{2m-3}+2\overline{\eta}(-1)G\overline{G}(q-1)q^{m-3}.
	$$
	By Lemma \ref{p3le4}, the frequency is $(q-1)(q^{m-1}+q^{-1}\overline{\eta}(-1)G\overline{G})$.\\ 
	\textbf{Case 3:} If $\overline{\eta}(\mathrm{Tr}(\alpha^{2}))=-1$ or $\overline{\eta}(\mathrm{Tr}(\beta^{2}))=-1$, then, by Lemmas \ref{p3le7} and \ref{p3le13}, we have
	$$
	wt_{H}(\phi(\textbf{c}))=2(q-1)q^{2m-3}-2\overline{\eta}(-1)G\overline{G}(q-1)q^{m-3}.
	$$
	By Lemma \ref{p3le4}, the frequency is $(q-1)(q^{m-1}-q^{-1}\overline{\eta}(-1)G\overline{G})$.\\
	\textbf{Case 4:} If $\overline{\eta}(\mathrm{Tr}(\alpha^{2})\mathrm{Tr}(\beta^{2}))=1$, then, from Lemmas \ref{p3le7} and \ref{p3le13}, we have
	$$
	wt_{H}(\phi(\textbf{c}))=2(q-1)q^{2m-3}-2\overline{\eta}(-1)G^{2}(q-1)q^{-2}.
	$$
	By Lemma \ref{p3le7}, the frequency is $\frac{(q-1)^{2}}{2}(q^{2m-2}+q^{-1}\overline{\eta}(-1)G^{2})$.\\ 
	\textbf{Case 5:} If $\overline{\eta}(\mathrm{Tr}(\alpha^{2})\mathrm{Tr}(\beta^{2}))=-1$, then, from Lemmas \ref{p3le7} and \ref{p3le13}, one can acquire
	$$
	wt_{H}(\phi(\textbf{c}))=2(q-1)q^{2m-3}+2\overline{\eta}(-1)G^{2}(q-1)q^{-2}.
	$$
	By Lemma \ref{p3le7}, this value occurs  $\frac{(q-1)^{2}}{2}(q^{2m-2}-q^{-1}\overline{\eta}(-1)G^{2})$ times.\\
	Since, for $\textbf{c}\in\mathcal{C}_{D}$, $d_{L}(\textbf{c})$=$d_{H}(\phi(\textbf{c}))$. Thus, we obtain Table 1 which demonstrates Lee-weight distribution of $\mathcal{C}_{D}$.
	
	For $\lambda\in\mathbb{F}_{q}^{*}$ and $\phi(\textbf{c})=(y_{1},y_{2},\ldots,y_{2n})\in\phi(\mathcal{C}_{D}),$ denote
	$$\mathcal{A}_{\lambda}(\phi(\textbf{c}))=\#\{1\leq j\leq2n: y_{j}=\lambda\},$$ 
	where $n$ represents the length of the code $\mathcal{C}_{D}.$
	Then
	$$\mathcal{A}_{\lambda}(\phi(\textbf{c}))=2\Omega_{\lambda}(\alpha,\beta).$$ 
	Thus, by Lemmas \ref{p3le13} and \ref{p3le7}, we obtained the complete weight enumerator of $\phi(\mathcal{C}_{D}).$
	This concludes the theorem's proof.	
\end{proof}
\begin{example}
	Assume that $q=m=3$. Then the parameters of the presented linear code $\mathcal{C}_{D}$ and its corresponding linear code $\phi(\mathcal{C}_{D})$ are $[80, 3, 72]$ and $[160, 6, 72]$, respectively. Additionally, the Lee-weight enumerator of the linear code $\mathcal{C}_{D}$ and complete Hamming-weight  enumerator of the linear code $\phi(\mathcal{C}_{D})$ are $1+12 z^{72}+180z^{96}+368z^{108}+144z^{120}+24z^{144}$ and
	$$w_{0}^{160}+12w_{0}^{88}\prod_{i=1}^{2}w_{i}^{36}+180w_{0}^{64}\prod_{i=1}^{2}w_{i}^{48}+368w_{0}^{52}\prod_{i=1}^{2}w_{i}^{54}+144w_{0}^{40}\prod_{i=1}^{2}w_{i}^{60}+24w_{0}^{16}\prod_{i=1}^{2}w_{i}^{72},$$ respectively.
\end{example}
\begin{remark}\label{re1}	
	In comparison to the linear code $[160, 10, 48]$ in Example 3.1 of \cite{LL19}, the linear code $[160, 6, 72]$ in the preceding example has a better relative minimum distance.
	
\end{remark}
\begin{example}	Let $q=3~and~m=5$. Then the parameters of presented linear code $\mathcal{C}_{D}$ and its corresponding linear code $\phi(\mathcal{C}_{D})$ are $[6560, 5, 7776]$ and $[13120,10, 7776]$, respectively. Additionally, the Lee-weight enumerator of the linear code $\mathcal{C}_{D}$ and complete Hamming-weight  enumerator of its Gray image $\phi(\mathcal{C}_{D})$ are $1+144z^{7776}+13284z^{8640}+32480z^{8748}+12960z^{8856}+180z^{9720}$ and
	\begin{align*}
		&w_{0}^{13120}+144w_{0}^{5344}\prod_{i=1}^{2}w_{i}^{3888}+13284w_{0}^{4480}\prod_{i=1}^{2}w_{i}^{4320}+32480w_{0}^{4372}\prod_{i=1}^{2}w_{i}^{4374}\\&+12960w_{0}^{4264}\prod_{i=1}^{2}w_{i}^{4428}+180w_{0}^{3400}\prod_{i=1}^{2}w_{i}^{4860},
	\end{align*} respectively.	
\end{example}
\begin{theorem}\label{p3theorem2}
	Let $m\geq 2$ be even. The parameters for the presented  linear code $\mathcal{C}_{D}$ are $[(q^{m-1}+q^{-1}(q-1)G)^{2}-1,m]$ Additionally,  Table 2 contains its Lee-weight distribution, and the complete Hamming-weight enumerator of corresponding linear code $\phi(\mathcal{C}_{D})$ is	
	\begin{align*}
		&w_{0}^{2P_{3}^{2}-2}+2(P_{3}-1)w_{0}^{P_{2}+Q_{3}(q-1)(q+G)q^{-1}-2}\prod_{i=1}^{q-1}w_{i}^{P_{2}-Q_{3}q^{-1}G}\\&+2Q_{2}w_{0}^{P_{2}-Q_{3}q^{-1}G-2}\prod_{i=1}^{q-1}w_{i}^{P_{2}+Q_{3}+2Gq^{m-3}}\\&+(P_{3}-1)^{2}w_{0}^{P_{2}+(q-1)Q_{3}-2}\prod_{i=1}^{q-1}w_{i}^{P_{2}}+2Q_{2}(P_{3}-1)w_{0}^{P_{2}-2}\prod_{i=1}^{q-1}w_{i}^{P_{2}+Q_{3}}\\&+Q_{2}^{2}w_{0}^{P_{2}+Q_{3}-2}\prod_{i=1}^{q-1}w_{i}^{P_{2}+2(q-2)q^{m-2}},
	\end{align*}
	where $G=-(-1)^{\frac{(q-1)^{2}m}{8}}q^{\frac{m}{2}},~P_{2}=2q^{2m-3}+4G(q-1)p^{m-3},~\mbox{$Q_{2}=(p-1)(q^{m-1}-q^{-1}G)$},~\\\mbox{$P_{3}=q^{m-1}+q^{-1}(p-1)G$},~Q_{3}=2(q-1)q^{m-2}.$
\end{theorem}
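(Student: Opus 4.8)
The plan is to reprise the architecture of the proof of Theorem~\ref{p3theorem1}, now feeding in the $2\mid m$ branches of Lemmas~\ref{p3le4}, \ref{p3le7} and \ref{p3le13}. First I would fix the even-$m$ Gauss-sum data: by Lemma~\ref{p3le1} the sum $G=G(\eta,\chi_1)=-(-1)^{(q-1)^2m/8}q^{m/2}$ is a signed \emph{real} number, so in particular $G^2=q^m$, and by Lemma~\ref{p3le2} one has $\eta(y)=1$ for every $y\in\mathbb{F}_q^*$. The relation $G^2=q^m$ is the algebraic fact that repeatedly collapses the messy intermediate expressions. The length of $\mathcal{C}_D$ is $N(0,0)-1$, which by the $2\mid m$ case of Lemma~\ref{p3le7} equals $(q^{m-1}+q^{-1}(q-1)G)^2-1=P_3^2-1$.

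Next, exactly as in Theorem~\ref{p3theorem1}, for the codeword $\mathbf{c}$ attached to $x=\alpha+u\beta$ the Gray image satisfies $wt_H(\phi(\mathbf{c}))=(q-1)\big(\Omega_\lambda(\alpha,\beta)+\Omega_\lambda(\beta,\alpha)\big)=2(q-1)\Omega_\lambda(\alpha,\beta)$, using the symmetry $\Omega_\lambda(\alpha,\beta)=\Omega_\lambda(\beta,\alpha)$ and the fact that $\Omega_\lambda$ does not depend on $\lambda$, so that $\mathcal{A}_\lambda(\phi(\mathbf{c}))=2\Omega_\lambda(\alpha,\beta)$ for every $\lambda\in\mathbb{F}_q^*$. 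I would then sort the pairs $(\alpha,\beta)$ according to the \emph{type} of each coordinate --- ``zero'', ``nonzero with zero trace of the square'', ``nonzero with nonzero trace of the square'' --- giving nine combinations that, after using the symmetry, fuse into the six classes in the statement: the pair $(0,0)$ (weight $0$, multiplicity $1$) from Lemma~\ref{p3le13}(1); the pairs with exactly one zero coordinate from Lemma~\ref{p3le13}(2)--(3); and the four sign patterns of $(\mathrm{Tr}(\alpha^2),\mathrm{Tr}(\beta^2))$ with both coordinates nonzero from the $2\mid m$ branch of Lemma~\ref{p3le13}(4). Multiplying each $\Omega_\lambda$-value by $2(q-1)$ gives the Lee-weights of Table~2; the matching multiplicities come from $N_0=q^{m-1}+q^{-1}(q-1)G=P_3$ (Lemma~\ref{p3le4}, $2\mid m$) for the one-coordinate-zero classes and from the $2\mid m$ cases of $N(s,t)$ in Lemma~\ref{p3le7} for the two-coordinates-nonzero classes, writing $\#\{\alpha\neq0:\mathrm{Tr}(\alpha^2)=0\}=P_3-1$ and $\#\{\alpha:\mathrm{Tr}(\alpha^2)\neq0\}=(q-1)(q^{m-1}-q^{-1}G)=Q_2$. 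Summing the six multiplicities yields $(P_3+Q_2)^2=q^{2m}$, which is exactly the number of admissible $x$, so $x\mapsto\mathbf{c}$ is injective and $\mathcal{C}_D$ has dimension $m$.

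Finally, for the complete Hamming-weight enumerator of $\phi(\mathcal{C}_D)$ I would use $\mathcal{A}_\lambda(\phi(\mathbf{c}))=2\Omega_\lambda(\alpha,\beta)$ as the exponent of $\prod_{i=1}^{q-1}w_i$ in each class, and $\mathcal{A}_0=2n-(q-1)\mathcal{A}_\lambda=2P_3^2-2-(q-1)\mathcal{A}_\lambda$ as the exponent of $w_0$, then rewrite everything in terms of $P_2,Q_2,P_3,Q_3$ using $G^2=q^m$ together with the consequent identity $2P_3^2=qP_2+(q-1)Q_3$. I expect the only delicate part to be precisely this bookkeeping: keeping the first-order $G$-terms and the $G^2$-terms straight while absorbing them via $G^2=q^m$, and verifying that the six displayed weights are pairwise distinct for every even $m\ge2$ --- in particular at the boundary $m=2$, where $q^{m-3}=q^{-1}$ and several exponents shrink --- so that no classes merge; the remainder is a mechanical transcription of the odd-$m$ argument with the even-$m$ lemma values substituted in.
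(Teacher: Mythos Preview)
Your proposal is correct and follows essentially the same approach as the paper's own proof: compute the length as $N(0,0)-1$ via Lemma~\ref{p3le7}, express $wt_H(\phi(\mathbf{c}))=2(q-1)\Omega_\lambda(\alpha,\beta)$ by symmetry, feed in the $2\mid m$ branches of Lemma~\ref{p3le13} across the various $(\alpha,\beta)$ type-classes, and read off multiplicities from Lemmas~\ref{p3le4} and~\ref{p3le7}, then obtain the complete weight enumerator from $\mathcal{A}_\lambda=2\Omega_\lambda$. Your case bookkeeping is in fact slightly cleaner than the paper's (which states ``assume $\alpha\neq0$ and $\beta\neq0$'' before listing cases that actually cover the one-coordinate-zero situations), and your explicit dimension check via $\sum(\text{multiplicities})=q^{2m}$ and your remark on verifying distinctness of the six weights are useful additions the paper omits.
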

\begin{center}
	\begin{tabularx}{\columnwidth}{p{9.8cm} p{5cm}}
		\text{\textbf{Table 2:} The presented code $\mathcal{C}_{D}$, for $m\geq2$ even, has the following Lee-weight distribution }\\
		\hline
		Weight w & Multiplicity $f$ \\  
		\hline  
		0 & 1  \\ 
		
		$2(q-1)q^{2m-3}+2G(q-1)^{2}q^{m-3}$ & $2(q^{m-1}+q^{-1}(q-1)G-1) $ \\
		
		$2(q-1)q^{2m-3}+2G(q-1)(2q-1)q^{m-3}+2(q-1)^{2}q^{m-2}$ 	& $2(q-1)(q^{m-1}-q^{-1}G)$\\
		$2(q-1)q^{2m-3}+4G(q-1)^{2}q^{m-3}$ 	& $(q^{m-1}+q^{-1}(q-1)G-1)^{2}$\\
		$2(q-1)q^{2m-3}+4G(q-1)^{2}q^{m-3}+2(q-1)^{2}q^{m-2}$	&$2(q-1)(q^{m-1}-q^{-1}G)(q^{m-1}+q^{-1}(q-1)G-1)$\\
		$2(q-1)q^{2m-3}+4G(q-1)^{2}q^{m-3}+2(q-1)(q-2)q^{m-2}$	&$(q-1)^{2}(q^{m-1}-q^{-1}G)^{2}$\\
		
		\hline
	\end{tabularx}
\end{center}
\begin{proof}
	For even $m(\geq2)$, let $G=-(-1)^{\frac{(q-1)^{2}m}{8}}q^{\frac{m}{2}}$. Then, from Lemma \ref{p3le7}, the length of the code $\mathcal{C}_{D}$= $N(0,0)-1=(q^{m-1}+q^{-1}(q-1)G)^{2}-1$. For $\textbf{c}\in\mathcal{C}_{D},$ we have  $wt_{H}(\phi(\textbf{c}))=(q-1)(\Omega_{\lambda}(\alpha,\beta)+\Omega_{\lambda}(\beta,\alpha))=2(q-1)\Omega_{\lambda}(\alpha,\beta)$ since $ \Omega(\alpha,\beta)=\Omega(\beta,\alpha).$\\
	Suppose that $\alpha=0$ and $\beta=0$, then we have, by Lemmas \ref{p3le7} and \ref{p3le13},
	$wt_{H}(\phi(\textbf{c}))=0.$
	This value only appears once.\\
	Next, we assume that $\alpha\neq0$ and $\beta\neq0$, then we have the following distinct cases to consider: \\
	\textbf{Case 1:} If $\mathrm{Tr}(\alpha^{2})=0$ or $\mathrm{Tr}(\beta^{2})=0$, then, by Lemmas \ref{p3le7} and \ref{p3le13}, we have
	$$
	wt_{H}(\phi(\textbf{c})=2(q-1)q^{2m-3}+2G(q-1)^{2}q^{m-3}.
	$$
	By Lemmas \ref{p3le4}, the frequency is  $2(q^{m-1}+q^{-1}(q-1)G-1)$.\\
	\textbf{Case 2:} If $\mathrm{Tr}(\alpha^{2})\neq0$ or $\mathrm{Tr}(\beta^{2})\neq0$, then, from Lemmas \ref{p3le7} and \ref{p3le13}, we obtain
	$$
	wt_{H}(\phi(\textbf{c}))=2(q-1)q^{2m-3}+2G(q-1)(2q-1)q^{m-3}+2(q-1)^{2}q^{m-2}.
	$$
	By Lemma \ref{p3le4}, the frequency is $2(q-1)(q^{m-1}-q^{-1}G)$.\\
	\textbf{Case 3:} If $\mathrm{Tr}(\alpha^{2})=0$ and $\mathrm{Tr}(\beta^{2})=0$, then, from Lemmas \ref{p3le7} and \ref{p3le13}, we have
	$$
	wt_{H}(\phi(\textbf{c}))=2(q-1)q^{2m-3}+4G(q-1)^{2}q^{m-3}.
	$$
	By Lemmas \ref{p3le4} and \ref{p3le7}, the frequency is $(q^{m-1}+q^{-1}(q-1)G-1)^{2}$.\\
	\textbf{Case 4:} If $\mathrm{Tr}(\alpha^{2})=0\text{ and } \mathrm{Tr}(\beta^{2})\neq0$ or $\mathrm{Tr}(\alpha^{2})\neq0\text{ and } \mathrm{Tr}(\beta^{2})=0$, then, by Lemmas \ref{p3le7} and \ref{p3le13}, one can obtain
	$$
	wt_{H}(\phi(\textbf{c}))=2(q-1)q^{2m-3}+4G(q-1)^{2}q^{m-3}+2(q-1)^{2}q^{m-2}.
	$$
	By Lemmas \ref{p3le4} and \ref{p3le7}, the frequency is $2(q-1)(q^{m-1}-q^{-1}G)(q^{m-1}+q^{-1}(q-1)G-1)$.\\
	\textbf{Case 5:} If $\mathrm{Tr}(\alpha^{2})\neq0\text{ and }\mathrm{Tr}(\beta^{2})\neq0$, then,
	$$
	wt_{H}(\phi(\textbf{c}))=2(q-1)q^{2m-3}+4G(q-1)^{2}q^{m-3}+2(q-1)(q-2)q^{m-2})
	$$
	follows from Lemmas \ref{p3le7} and \ref{p3le13}.
	By Lemma \ref{p3le7}, this value occurs  $(q-1)^{2}(q^{m-1}-q^{-1}G)^{2}$ times.\\
	Since, for $\textbf{c}\in\mathcal{C}_{D}$, $d_{L}(\textbf{c})$=$d_{H}(\phi(\textbf{c}))$. Thus, we have the Lee-weight distribution of $\mathcal{C}_{D}$ which completes the Table 2. For $\lambda\in\mathbb{F}_{q}^{*}$ and $\phi(\textbf{c})=(y_{1},y_{2},\ldots,y_{2n})\in\phi(\mathcal{C}_{D}),$ denote
	$$\mathcal{A}_{\lambda}(\phi(\textbf{c}))=\#\{1\leq j\leq2n: y_{j}=\lambda\},$$
	where $n$ stands for the length of the code $\mathcal{C}_{D}.$ Then
	$$\mathcal{A}_{\lambda}(\phi(\textbf{c}))=2\Omega_{\lambda}(\alpha,\beta).$$ 
	Thus, by Lemmas \ref{p3le7} and \ref{p3le13}, we obtain the complete weight enumerator of $\phi(\mathcal{C}_{D}).$ This completes the proof. 	
\end{proof}
\begin{example}
	Assume that $q=3~and~m=4$. Then the parameters of presented linear code $\mathcal{C}_{D}$ and its corresponding linear code $\phi(\mathcal{C}_{D})$ are $[440, 4, 504]$ and $[880, 8, 504]$, respectively. Additionally, the Lee-weight enumerator of the linear code $\mathcal{C}_{D}$ and complete Hamming-weight  enumerator of the linear code $\phi(\mathcal{C}_{D})$ are $1+120z^{504}+400z^{540}+3600z^{576}+2400z^{612}+40z^{756}$ and
	\begin{align*}
		&w_{0}^{880}+120w_{0}^{376}\prod_{i=1}^{2}w_{i}^{252}+400w_{0}^{340}\prod_{i=1}^{2}w_{i}^{270}+3600w_{0}^{304}\prod_{i=1}^{2}w_{i}^{288}+2400w_{0}^{268}\prod_{i=1}^{2}w_{i}^{306}\\&+40w_{0}^{124}\prod_{i=1}^{2}w_{i}^{378},
	\end{align*}
	respectively.
\end{example}

\section{Application to secret sharing schemes}\label{Sec.5}
The set $s(x)=\{i:x_{i}\neq0,1\leq i\leq n\}$  is referred to as the support of $x$, where  $x=(x_{1},x_{2},\ldots,x_{n})\in\mathbb{F}_{q}^{n}$.  We say that vector $x$ covers vector $y$ if $ s(y)\subseteq s(x)$. If there is not another nonzero codeword $y$ such that $s(y)\subseteq s(x)$ exists, then a nonzero codeword $x$ of a linear code $\mathcal{C}$ is said to be minimal. Identifying minimal codewords for a given linear code is a challenging problem. However, a numerical condition related to the weights of the code, discussed in \cite{AB98}, provides insight into whether codewords in a linear code  $\mathcal{C}$ are minimal. In secret sharing schemes with interesting access structure, minimal codewords have uses (see \cite{YD06}).
\begin{lemma}\label{ABlemma}
	(Ashikhmin-Barg)\cite{AB98}
	Denote the minimum and maximum nonzero weights by $w_{0}$ and $w_{\infty}$, respectively. Every nonzero codeword of $\mathcal{C}$ is minimal if the following condition holds: $$\frac{w_{0}}{w_{\infty}}>\frac{(q-1)}{q}.$$
	\end{lemma}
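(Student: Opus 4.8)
The plan is to argue by contradiction, exploiting the pencil of codewords $\{c_{1}-ac_{2}:a\in\mathbb{F}_{q}\}$. Suppose some nonzero codeword $c_{1}\in\mathcal{C}$ is not minimal. By the definition recalled above, there is then a nonzero codeword $c_{2}\in\mathcal{C}$ with $s(c_{2})\subseteq s(c_{1})$ which is not a scalar multiple of $c_{1}$, so that $c_{1}$ and $c_{2}$ are $\mathbb{F}_{q}$-linearly independent. Note that every member of the family $c_{1}-ac_{2}$ has support contained in $s(c_{1})$, and, by linear independence, is nonzero for each $a\in\mathbb{F}_{q}$, hence has Hamming weight at least $w_{0}$.

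First I would evaluate $\sum_{a\in\mathbb{F}_{q}}w_{H}(c_{1}-ac_{2})$ by counting the contribution of each coordinate. Split the $n$ positions into three classes: a position outside $s(c_{1})$ contributes $0$ to every term; a position in $s(c_{1})\setminus s(c_{2})$ has $i$-th coordinate $(c_{1})_{i}\neq 0$ for all $a$, hence contributes $q$ in total; and a position $i\in s(c_{2})\subseteq s(c_{1})$ has $(c_{1}-ac_{2})_{i}=0$ for exactly the single value $a=(c_{1})_{i}(c_{2})_{i}^{-1}$, hence contributes $q-1$ in total. Adding up yields $\sum_{a\in\mathbb{F}_{q}}w_{H}(c_{1}-ac_{2})=q\bigl(w_{H}(c_{1})-w_{H}(c_{2})\bigr)+(q-1)w_{H}(c_{2})=q\,w_{H}(c_{1})-w_{H}(c_{2})$.

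Next I would peel off the term $a=0$, equal to $w_{H}(c_{1})$, to obtain $\sum_{a\in\mathbb{F}_{q}^{*}}w_{H}(c_{1}-ac_{2})=(q-1)w_{H}(c_{1})-w_{H}(c_{2})$. Since each of these $q-1$ codewords is nonzero, the left side is at least $(q-1)w_{0}$, so $(q-1)w_{0}\le (q-1)w_{H}(c_{1})-w_{H}(c_{2})$. Finally, using $w_{H}(c_{1})\le w_{\infty}$ and $w_{H}(c_{2})\ge w_{0}$, this forces $(q-1)w_{0}\le (q-1)w_{\infty}-w_{0}$, i.e. $q\,w_{0}\le (q-1)w_{\infty}$, contradicting the hypothesis $w_{0}/w_{\infty}>(q-1)/q$. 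Therefore no such non-minimal $c_{1}$ exists, and every nonzero codeword of $\mathcal{C}$ is minimal.

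The main obstacle, modest as it is, lies in the coordinatewise count of the second step: one must carefully separate the coordinates that are forced nonzero for every scalar from those that vanish for a unique scalar, and note that it is precisely the linear independence of $c_{1}$ and $c_{2}$ that guarantees all $q-1$ shifted codewords $c_{1}-ac_{2}$ with $a\neq 0$ are nonzero, so that the lower bound by $w_{0}$ legitimately applies to each of them; the remaining inequalities are pure bookkeeping.
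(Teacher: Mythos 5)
Your proof is correct: the pencil argument with the coordinatewise count $\sum_{a\in\mathbb{F}_{q}}w_{H}(c_{1}-ac_{2})=q\,w_{H}(c_{1})-w_{H}(c_{2})$, followed by bounding the $q-1$ nonzero codewords $c_{1}-ac_{2}$ ($a\neq 0$) below by $w_{0}$, is exactly the standard Ashikhmin--Barg argument. The paper states this lemma only as a citation to \cite{AB98} and gives no proof of its own, so there is no in-paper argument to compare against; the one point worth noting is that you (rightly) work with the standard notion of minimality --- a codeword covering only its own scalar multiples --- rather than the paper's looser phrasing, and that is indeed the notion under which the lemma holds.
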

\begin{proposition}
For odd $m(\geq5)$, the nonzero codewords of $q$-ary linear code $\phi(\mathcal{C}_{D})$ are minimal.
\end{proposition}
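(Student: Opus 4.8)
The plan is to apply the Ashikhmin--Barg criterion (Lemma \ref{ABlemma}) directly to $\phi(\mathcal{C}_D)$, using the weight data of Theorem \ref{p3theorem1}: since $d_L(\mathbf c)=d_H(\phi(\mathbf c))$, the nonzero Hamming weights of $\phi(\mathcal{C}_D)$ are exactly the five nonzero Lee weights listed in Table 1. First I would sort those five weights. Comparing exponents, $\tfrac{3m-5}{2}>m-2$ holds for every $m>1$, and $2m-3>\tfrac{3m-5}{2}$ also holds for every $m>1$, so all five weights are positive and the smallest and largest among them are
\begin{align*}
w_{0}&=2(q-1)\bigl(q^{2m-3}-q^{\frac{3m-5}{2}}\bigr),\\
w_{\infty}&=2(q-1)\bigl(q^{2m-3}+q^{\frac{3m-5}{2}}\bigr).
\end{align*}

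Next I would form and simplify the ratio. Dividing numerator and denominator of $w_0/w_\infty$ by $q^{2m-3}$ and setting $t=q^{-\frac{m-1}{2}}$ (using $\tfrac{3m-5}{2}-(2m-3)=-\tfrac{m-1}{2}$) gives
$$\frac{w_{0}}{w_{\infty}}=\frac{1-t}{1+t}.$$
The Ashikhmin--Barg inequality $\frac{w_0}{w_\infty}>\frac{q-1}{q}$ is then equivalent, after clearing the positive denominators, to $q(1-t)>(q-1)(1+t)$, i.e. to $(2q-1)t<1$, i.e. to
$$q^{\frac{m-1}{2}}>2q-1.$$

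Finally I would check this last inequality for odd $m\ge 5$: then $\tfrac{m-1}{2}\ge 2$, so $q^{\frac{m-1}{2}}\ge q^{2}=(q-1)^{2}+(2q-1)>2q-1$ for the odd prime $q$, and the inequality is strict. Hence the hypothesis of Lemma \ref{ABlemma} is satisfied and every nonzero codeword of $\phi(\mathcal{C}_D)$ is minimal. I do not expect a real obstacle here; the only points needing a little care are confirming that it is $q^{\frac{3m-5}{2}}$ rather than $q^{m-2}$ that controls the spread of the weights (so that the $w_0$, $w_\infty$ above are the correct extremes), and noting that the bound $q^{\frac{m-1}{2}}>2q-1$ fails at $m=3$ (where it would read $q>2q-1$), which is precisely why the statement is restricted to $m\ge 5$.
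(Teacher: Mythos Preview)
Your proposal is correct and follows essentially the same route as the paper: both identify the extreme nonzero weights from Table~1 as $w_{0}=2(q-1)(q^{2m-3}-q^{\frac{3m-5}{2}})$ and $w_{\infty}=2(q-1)(q^{2m-3}+q^{\frac{3m-5}{2}})$ and then verify the Ashikhmin--Barg inequality. The only cosmetic difference is that the paper checks the equivalent inequality $2q^{\frac{3m-3}{2}}<q^{2m-3}+q^{\frac{3m-5}{2}}$ directly, whereas you pass through the substitution $t=q^{-\frac{m-1}{2}}$ to reduce it to $q^{\frac{m-1}{2}}>2q-1$; the arguments are equivalent.
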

\begin{proof} Let $m\geq5$ be odd. Then, for the Gray image $\phi(\mathcal{C}_{D})$ of the presented code $\mathcal{C}_{D}$ in Theorem \ref{p3theorem1}, we get
$$\frac{w_{0}}{w_{\infty}}=\frac{2(q-1)(q^{2m-3}-q^{\frac{3m-5}{2}})}{2(q-1)(q^{2m-3}+q^{\frac{3m-5}{2}})}=\frac{(q^{2m-3}-q^{\frac{3m-5}{2}})}{(q^{2m-3}+q^{\frac{3m-5}{2}})}.$$
From the fact that $2q^{\frac{3m-3}{2}}<q^{2m-3}+q^{\frac{3m-5}{2}}$ when $m\geq5$ and $q$ is an odd prime, we can conclude that
$$\frac{w_{0}}{w_{\infty}}>\frac{(q-1)}{q}.$$This  completes the proof.\end{proof}
\begin{proposition}
For even $m(\geq6)$, the nonzero codewords of $q$-ary linear code $\phi(\mathcal{C}_{D})$ are minimal.
\end{proposition}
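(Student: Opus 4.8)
The proof will run entirely through the Ashikhmin--Barg criterion of Lemma \ref{ABlemma} applied to the $q$-ary code $\phi(\mathcal{C}_{D})$, whose Hamming-weight distribution is exactly the Lee-weight distribution of $\mathcal{C}_{D}$ listed in Table 2 (recall $\phi$ is a linear isometry). Since $m$ is even, Lemma \ref{p3le1} forces $G=\pm q^{m/2}$; I set $g:=q^{m/2}$ and $\varepsilon:=G/g\in\{1,-1\}$, and I first record that for $m\geq6$ all five multiplicities in Table 2 are strictly positive, so all five nonzero weights genuinely occur. Pulling the common positive factor $2(q-1)q^{m-3}$ out of each nonzero weight in Table 2 and using $q^{m}=g^{2}$, the five nonzero weights become $2(q-1)q^{m-3}A_{j}$ with
\begin{align*}
A_{1}&=g\bigl(g+\varepsilon(q-1)\bigr),\qquad A_{2}=\bigl(g+\varepsilon(q-1)\bigr)\bigl(g+\varepsilon q\bigr),\\
A_{3}&=g^{2}+2\varepsilon(q-1)g,\qquad A_{4}=A_{3}+q(q-1),\qquad A_{5}=A_{3}+q(q-2),
\end{align*}
the factorization of $A_{2}$ using $\varepsilon^{2}=1$. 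Spotting these factorizations is the key algebraic step, since they will make the ratio $w_{0}/w_{\infty}$ collapse after a cancellation.

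Next I split into the two cases determined by the sign of $G$. If $\varepsilon=1$, the sign checks $A_{3}-A_{1}=(q-1)g>0$, $A_{5}-A_{3}=q(q-2)>0$, $A_{4}-A_{5}=q>0$ and $A_{2}-A_{4}=g>0$ (valid for $q\geq3$) give the ordering $A_{1}<A_{3}<A_{5}<A_{4}<A_{2}$, so $w_{0}/w_{\infty}=A_{1}/A_{2}=g/(g+q)$ after cancelling the common factor $g+q-1$; the condition $g/(g+q)>(q-1)/q$ is equivalent to $g>q(q-1)$, which holds once $g=q^{m/2}\geq q^{2}$. If $\varepsilon=-1$, then using $g=q^{m/2}\geq q^{3}$ for $m\geq6$ one checks $A_{3}-A_{2}=g-q(q-1)>0$, $A_{5}-A_{3}=q(q-2)>0$, $A_{4}-A_{5}=q>0$ and $A_{1}-A_{4}=(q-1)(g-q)>0$, giving $A_{2}<A_{3}<A_{5}<A_{4}<A_{1}$, so $w_{0}/w_{\infty}=A_{2}/A_{1}=(g-q)/g$ after cancelling $g-q+1$; here $(g-q)/g>(q-1)/q$ is equivalent to $g>q^{2}$, i.e.\ $q^{m/2}>q^{2}$, i.e.\ $m>4$. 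In both cases the hypothesis of Lemma \ref{ABlemma} is met for even $m\geq6$, and the proposition follows.

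The calculations are routine once the $A_{j}$ are in hand; the only genuinely delicate point is that the threshold is $m\geq6$, not $m\geq4$. This is forced by the $\varepsilon=-1$ branch, where the reduced inequality $g>q^{2}$ degenerates to the equality $g=q^{2}$ at $m=4$ (consistent with the example $q=3$, $m=4$, where $w_{0}/w_{\infty}=504/756=2/3=(q-1)/q$). Accordingly, the main obstacle is being careful about which of the five weights is extremal in each sign case and retaining the sharp bound, rather than any substantive difficulty.
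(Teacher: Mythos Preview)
Your proof is correct and follows the same Ashikhmin--Barg strategy as the paper, splitting on the sign of $G$ and comparing the extremal weights from Table~2. You are in fact more careful than the paper: you explicitly order all five weights in each sign case (the paper simply asserts which pair is extremal), and your factorizations $A_{1}/A_{2}=g/(g+q)$ and $A_{2}/A_{1}=(g-q)/g$ make the threshold $m\geq6$ transparent, including the sharpness at $m=4$ that the paper leaves implicit.
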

\begin{proof}
Assume that $m\geq4$ is even. Then, for the Gray image $\phi(\mathcal{C}_{D})$ of the presented code $\mathcal{C}_{D}$ in Theorem \ref{p3theorem2}, we get
$$\frac{w_{0}}{w_{\infty}}=\frac{2(q-1)q^{2m-3}+2(q-1)^{2}q^{\frac{3m-6}{2}}}{2(q-1)q^{2m-3}+2(q-1)(2q-1)q^{\frac{3m-6}{2}}+2(q-1)^{2}q^{m-2}}$$  or 
$$\frac{w_{0}}{w_{\infty}}=\frac{2(q-1)q^{2m-3}-2(q-1)(2q-1)q^{\frac{3m-6}{2}}+2(q-1)^{2}q^{m-2}}{2(q-1)q^{2m-3}-2(q-1)^{2}q^{\frac{3m-6}{2}}}.$$ 
Then it can easily be seen that 
$$\frac{q^{2m-3}+(q-1)q^{\frac{3m-6}{2}}}{q^{2m-3}+(2q-1)q^{\frac{3m-6}{2}}+(q-1)q^{m-2}}>\frac{q-1}{q},\text{ for every $m\geq4$}$$ and 
$$\frac{q^{2m-3}-(2q-1)q^{\frac{3m-6}{2}}+(q-1)q^{m-2}}{q^{2m-3}-(q-1)q^{\frac{3m-6}{2}}}>\frac{q-1}{q},\text{ for every $m\geq6$}.$$ 
So, we can conclude that $\frac{w_{0}}{w_{\infty}}>\frac{(q-1)}{q}$ for every $m\geq6$, which completes the proof.\end{proof}
\section{Conclusions}\label{Sec.6}
In this paper, we have introduced a family of five-Lee-weight linear codes constructed using a specific defining set. Also, we have determined the complete Hamming-weight enumerator of corresponding linear code $\phi(\mathcal{C}_{D})$. In Section \ref{Sec.5}, we have shown that the constructed codes are minimal under some restriction on variables. Minimal codes are particularly valuable for constructing secret sharing schemes. The members of defining set used to construct codes in \cite{LL19}, belong to  $\mathbb{F}_{q^{m}}$ but the members of defining set used in the present paper belong to  $\mathbb{F}_{q^{m}}+u\mathbb{F}_{q^{m}}.$ Moreover, the ring that we have considered in this paper is more general than that in \cite{LL19}. More useful codes can be constructed over such rings based on defining sets and we leave this for future research work.

\textbf{Acknowledgements.} This research work was done with the affiliation Aligarh Muslim University, and was supported by the University Grants Commission, New Delhi, India, under  JRF in Science, Humanities $\&$ Social Sciences scheme with Grant number  11-04-2016-413564.

\textbf{Conflict of interest} The authors have mutually agreed on the order of authorship, acknowledging that both have contributed equally to the completion of the paper. They also confirm that appropriate acknowledgments have been made to all funding organizations that supported their work
%%%%%%%%%%%%%%%%%%%%%%%%%%%%%%%%%%%%%%%%%%%%%%%%%%%%%%%%%%%%%%%%%%%%%%%%%%%%%%%%%%%%%%%%%%%%%%%%%%%%%%%%%%%%%%%%%%%%%%%%%%%%%%%%%%%%%%%%%%%%%%%%%%%%%%%%%%%%%%%%%%%%%%%%%%%%%%%%%%%%%%%%%%%%%%%%%%%%%%%%%%%%%%%%%%%%%%%%%%%%%%%%%%%%%%%%%%%%%%%%%%%%%%%%%%%%%%

\end{document}